\newtheorem{proposition}{\bf Proposition}
\newtheorem{definition}{\bf Definition}
\newtheorem{remark}{Remark}
\newcommand{\sched}{\lambda}
\newcommand{\pow}{p}
\newcommand{\rate}{r}
\newcommand{\arrival}{a}
\newcommand{\channel}{h}
\newcommand{\tNx}{(t,\vectxx)}
\newcommand{\auxSched}{\upsilon}
\newcommand{\vqSched}{\Upsilon}
\newcommand{\game}{\set{G}}
\newcommand{\util}{\Gamma}
\newcommand{\policy}{p}
\newcommand{\mass}{\rho}
\newcommand{\noise}{\sigma^2}
\newcommand{\isd}{\text{ISD}}
\newcommand*{\myfigfactorx}{0.9}
\DeclareMathOperator*{\argmax}{arg\,max}
\DeclareMathOperator*{\argmin}{arg\,min}
\newcommand{\expect}{\mathbb{E}\,}
\newcommand{\vect}{\boldsymbol}
\newcommand{\vectSubTwoTime}[3]{ {#1}_{1}(#3),\dots,{#1}_{#2}(#3) }
\newcommand{\seta}[1]{1,\dots,#1}
\newcommand{\set}[1]{\mathcal{#1}}
\newcommand{\size}[1]{|\set{#1}|}
\newcommand{\vectx}{\vect{x}}
\newcommand{\vectxx}{\breve{\vect{x}}}
\newcommand{\vecty}{\vect{y}}
\newcommand{\vectY}{\vect{Y}}
\newcommand{\one}{\mathbf{1}}
\newcommand{\zero}{\mathbf{0}}
\newcommand{\texth}[1]{\!^\text{#1}}
\newcommand{\tran}{^\dag}
\newcounter{loopcntr}
\newcommand{\rpt}[2][1]{%
	\forloop{loopcntr}{0}{\value{loopcntr}<#1}{#2}%
}
\newcommand{\subgroup}[1]%
{\rlap{\smash{%
	\newcount\cnt%
	\cnt \numexpr#1\relax%
	\advance\cnt -1\relax%
	$\tabcolsep=.1em\begin{tabular}[t]{|l}\multicolumn{1}{l}{}\\%
	\rpt[\cnt]{\\}
	\\\hline\end{tabular}$%
}}}
\newcounter{myRefCount}
\begin{document}

\newlength\figurewidth
	
\title{Ultra Dense Small Cell Networks: Turning Density into Energy Efficiency
}

\author{
	Sumudu Samarakoon,~\IEEEmembership{Student~Member,~IEEE,}
	Mehdi Bennis,~\IEEEmembership{Senior~Member,~IEEE,}
	Walid Saad,~\IEEEmembership{Senior~Member,~IEEE,}
	M\'{e}rouane Debbah,~\IEEEmembership{Fellow,~IEEE}
	and~Matti~Latva-aho,~\IEEEmembership{Senior~Member,~IEEE}
	\thanks{
		This work is supported by the TEKES grant 2364/31/2014, the Academy of Finland (289611), the 5Gto10G project, the SHARING project under the Finland grant 128010, the U.S. National Science Foundation (NSF) under Grants CNS-1460333, CNS-1460316, and CNS-1513697, and the ERC Starting Grant 305123 MORE (Advanced Mathematical Tools for
		Complex Network Engineering).
		}%
	\thanks{
		S. Samarakoon, M. Bennis and~M.~Latva-aho are with the Department of Communications Engineering, University of Oulu, Finland (e-mail: \{sumudu,bennis,matti.latva-aho\}@ee.oulu.fi).
		
		W. Saad is with Wireless@VT, Bradley Department of Electrical and Computer Engineering, Virginia Tech, Blacksburg, VA (email: walids@vt.edu).
		
		M. Debbah is with Mathematical and Algorithmic Sciences Lab, Huawei France R\&D, Paris, France, (email: merouane.debbah@huawei.fr) and  with the Large Systems and Networks Group (LANEAS), CentraleSup\'{e}lec, Universit\'{e} Paris-Saclay, 3 rue Joliot-Curie, 91192 Gif-sur-Yvette, France.
		}
	}

\maketitle
\nopagebreak[4]
\vspace{-20pt}
\begin{abstract}

In this paper, a novel approach for joint power control and user scheduling is proposed for optimizing energy efficiency (EE), in terms of bits per unit energy, in ultra dense small cell networks (UDNs). 
Due to severe coupling in interference, this problem is formulated as a dynamic stochastic game (DSG) between small cell base stations (SBSs). 
This game enables to capture the dynamics of both the queues and channel states of the system. 
To solve this game, assuming a large homogeneous UDN deployment, the problem is cast as a \emph{mean-field} game (MFG) in which the MFG equilibrium is analyzed with the aid of low-complexity tractable partial differential equations.
Exploiting the stochastic nature of the problem, user scheduling is formulated as a stochastic optimization problem and solved using the drift plus penalty (DPP) approach in the framework of \emph{Lyapunov} optimization.
Remarkably, it is shown that by weaving notions from Lyapunov optimization and mean-field theory, the proposed solution yields an equilibrium control policy per SBS which maximizes the network utility while ensuring users' quality-of-service.
Simulation results show that the proposed approach achieves up to $70.7\%$ gains in EE and $99.5\%$ reductions in the network's outage probabilities compared to a baseline model which focuses on improving EE while attempting to satisfy the users' instantaneous quality-of-service requirements.

\end{abstract}

\begin{keywords}
	Dynamic stochastic game, Lyapunov optimization, mean field games, ultra dense network, 5G
\end{keywords}

\section{Introduction}\label{sec:introduction}

Wireless network densification is viewed as a promising approach to enable a 1000x improvement in wireless cellular network capacity~\cite{jnl:andrews14,online:alexiou13,pap:yunas14,pap:talwar14}.
Indeed, 5G systems are expected to be ultra-dense in terms of transmitters and receivers rendering network optimization highly complex~\cite{online:alexiou13} and \cite{jnl:wu15}.
In such ultra dense network (UDN) environments, resource management problems, such as power control and user equipment (UE) scheduling, become significantly more challenging due to the scale of the network~\cite{online:alexiou13,jnl:lopez15}. 
In addition, the spatio-temporal traffic demand fluctuations in the network, the dynamics in channel conditions, and the increasing overhead due to the need for coordination will further exacerbate the challenges of resource management in UDNs.
For instance, the uncertainties in terms of queue state information (QSI) and channel state information (CSI) as well as their evolution over time will play a pivotal role in resource optimization.  

Due to the involvement of large number of devices and the severe coupling of their control parameters with one another, the current state-of-the-art approaches used in classical, small scale network deployments are inadequate to study the resource optimization in UDNs.
Thus, power control, UE scheduling, interference mitigation, and base station (BS) deployment strategies for energy-efficient UDN have been recently investigated in a number of works such as \cite{jnl:andrews14} and \cite{jnl:wu15,jnl:lbjornson15,jnl:fahad15,jnl:lopez15,pap:ternon14,pap:gotsis14,pap:li15}.
The work in \cite{jnl:lopez15} investigates the potential gains and limitations of energy efficiency (EE) for outdoor UDNs considering small cells idle mode capabilities and BS/UE densification.
In \cite{jnl:lbjornson15}, an analytical framework was proposed for designing UDNs that maximizes EE by modeling the network using stochastic geometry and obtaining new lower bound on the average spectral efficiency.
The authors in \cite{jnl:fahad15} investigate how different BS and antenna deployment strategies, for both indoor and outdoor UDNs, can impact EE.
In \cite{pap:ternon14}, a power control and UE scheduling mechanism is proposed for UDN of clustered small cells which have access to a database consisting signal-to-noise ratio values of each BS-UE link.
The work in \cite{pap:gotsis14} proposes an optimization framework which allows flexible UE association among densely deployed BSs by minimizing the signaling overhead.
The authors in \cite{pap:li15} study how to improve EE of an UDN by jointly optimizing the resource utilization and the temporal variances in the traffic load. 
These studies provide valuable insights of both performance gains and limitations of UDNs.
However, most of these works \cite{jnl:lbjornson15,jnl:fahad15,jnl:lopez15,pap:ternon14,pap:gotsis14}, ignore uncertainties in QSI, do not properly model network dynamics, and typically focus on the problems of UE scheduling and resource allocation in isolation. 
Instead, in practice, there is a need to treat those problems jointly while accounting for the QSI dynamics and uncertainty.
Moreover, none of these existing works study the behavior of the ultra dense system setting in which the number of BSs and UEs grows infinitely large.

Recently, mean-field games (MFGs) received significant attention in the context of cellular networks with large number of players~\cite{book:gueant11,book:caines14,pap:mari12,jnl:meriaux13,pap:gummadi12}.
In MFGs, a large population consists of small interacting individual players and their decision making strategies are studied~\cite{book:gueant11,book:caines14}.
Due to the size of the population, the impacts of individual states and decisions are negligible while the abstract behavior of the population is modeled by a mean-field (MF).
As a result, the MF regime allows to cast the multi-player problem into a more tractable single player problem which depends on the state distribution of the population.
In~\cite{pap:mari12}, concurrent packet transmissions among large number of transmitter-receiver pairs with the objective of minimizing transmit power under the uncertainties of QSI and CSI are investigated using MFGs.
The authors in~\cite{jnl:meriaux13} study an energy efficient power control problem in multiple access channels whereas transmitters strategies depend on their receivers, battery levels and the strategies of others.
In~\cite{pap:gummadi12}, a competition of wireless resource sharing among large number of agents is investigated as an MFG.
Despite their interesting analytical perspective, these existing works remain limited to traditional macrocellular networks and do not address the challenges of small cell networks. 
Moreover, for simplicity, these works consider models with a single UE per BS or they address multi-user scenarios with conventional UE scheduling schemes (proportional-fair, round-robin, first-in-first-out). Thus, the opportunity to improve the proposed solutions for multi-user cases by smart UE scheduling has not been treated yet.

When considering multiple UEs associated with SBS, UE scheduling directly affects the performance of each SBS.
Due to the dynamic nature of channels and arbitrary arrivals, the UE scheduling process becomes a stochastic optimization problem~\cite{book:neely10,jnl:neely10,jnl:huang11,book:georgiadis06}.
To solve such stochastic optimization problems while ensuring queue stability, the \emph{dual-plus-penalty} (DPP) theorem from the Lyapunov optimization framework can be used~\cite{book:neely10,jnl:neely10,jnl:huang11,pap:neely10,book:georgiadis06,jnl:bethanabhotla13}.
Lyapunov DPP approach simplifies the stochastic optimization problem into multiple subproblems which can be solved at each time instance.
Remarkably, this simplification ensures the convergence to the optimal solution of the original stochastic optimization problem~\cite{book:neely10,jnl:neely10,jnl:huang11,book:georgiadis06}.
The work in \cite{jnl:huang11} presents a utility optimal scheduling algorithm for  processing networks based on Lyapunov DPP approach.
In~\cite{pap:neely10}, Lyapunov DPP method is used to solve stochastic optimization problem with non-convex objective functions.
In~\cite{jnl:bethanabhotla13}, a Lyapunov DPP framework is used to dynamically select service nodes for multiple video streaming UEs in a wireless network.
It can be noted that the Lyapunov DPP approach is a powerful tool for solving stochastic optimization problems.
However, to implement the above technique in decentralized control mechanisms, \emph{i)} the objective function should be chosen such that it can be decoupled among the control nodes as in \cite{jnl:huang11} or \emph{ii)} necessary bounds need to be identified for the coupling term as the authors in \cite{jnl:bethanabhotla13} evaluate a lower bound for the capacity in order to mitigate coupling due to interference.
Therefore, a suitable method to make an accurate evaluation on the coupled term (interference in wireless UDNs) needs to be investigated. 

The main contribution of this paper is to propose a novel joint power control and user scheduling mechanism for ultra-dense small cell networks with large number of small cell base stations (SBSs).
In the studied model, the objective of SBSs is to maximize their own time average energy efficiency (EE) in terms of the amount of bits transmitted per joule of energy consumption.
Due to the ultra-dense deployment, severe mutual interference is experienced by all SBSs and thus, must be properly managed.
To this end, SBSs have to compete with one another in order to maximize their EEs while ensuring UEs' quality-of-service (QoS).
This competition is cast as a dynamic stochastic game (DSG) in which players are the SBSs and their actions are their control vectors which determine the transmit power and user scheduling policy.
Due to the non-tractability of the DSG, we study the problem in the mean-field regime which enables us to capture a very dense small cell deployment.
Thus, employing a homogeneous control policy among all the SBSs in the network, the solution of the DSG is obtained by solving a set of coupled partial differential equations (PDEs) known as Hamilton-Jacobi-Bellman (HJB) and Fokker-Planck-Kolmogorov (FPK) equations~\cite{book:gueant11}.
The resulting equilibrium is known as the \emph{mean-field equilibrium} (MFE).
We analyze the sufficient conditions for the existence of an MFE in our problem and we prove that all SBSs converge to this MFE.
Exploiting the stochastic nature of the objective with respect to QSI and CSI, the UE scheduling procedure is modeled as a stochastic optimization problem.
To ensure the queue stability and thus, UE's QoS, the UE scheduling problem is solved within a DPP-based Lyapunov optimization framework~\cite{jnl:bethanabhotla13,book:neely10}.
An algorithm is proposed in which each SBS schedules its UEs as a function of CSI, QSI and the mean-field of interferers.
Remarkably, it is shown that combining the power allocation policy obtained from the MFG and the DPP-based scheduling policy enables SBSs to autonomously determine their optimal transmission parameters \emph{without coordination} with other neighboring cells.
To the best of our knowledge, \emph{this is the first work combining MFG and Lyapunov frameworks within the scope of UDNs}.
Simulation results show that the proposed approach achieves up to $70.7\%$ gains in EE and $99.5\%$ reductions in the network's outage probabilities compared to a baseline model which improves EE while satisfying users' instantaneous quality-of-service requirements.

The rest of this paper is organized as follows. 
Section~\ref{sec:system_model} presents the system model, formulates the DSG with finite number of players and discusses the sufficient conditions to obtain an equilibrium for the DSG.
In Section~\ref{sec:MFG}, using the assumptions of large number of players the DSG is cast as a MFG and solved.
The solution for UE scheduling at each SBS based on DPP framework is examined in Section~\ref{sec:formulations}.
The results are discussed in Section~\ref{sec:results} 
and finally, conclusions are drawn in Section~\ref{sec:conclusion}.

\section{System Model and Problem Definition}\label{sec:system_model}

Consider a downlink of an ultra dense wireless small cell network that consists of a set of $B$ SBSs $\set{B}$ using a common spectrum with bandwidth $\omega$.
These SBSs serve a set $\set{M}$ of $M$ UEs.
Here, $\set{M}= \set{M}_1 \cup \dots \cup \set{M}_{B}$ where $\set{M}_b$ is the set of UEs served by SBS $b\in\set{B}$.
For UE scheduling, we use a scheduling vector $\vect{\sched}_{b}(t)=\big[ \sched_{bm}(t) \big]_{\forall m\in\set{M}_b}$ for SBS $b$ where $\sched_{bm}(t)=1$ indicates that UE $m\in\set{M}_b$ is served by SBS $b$ at time $t$ and $\sched_{bm}(t)=0$ otherwise.
The channel gain between UE $m\in\set{M}_b$ and SBS $b$ at time $t$ is denoted by $\channel_{bm}(t)$ and an additive white Gaussian noise with zero mean and $\noise$ variance is assumed.
The instantaneous data rate of UE $m$ is given by:
\begin{equation}\label{eqn:datarate_ue}
	\rate_{bm}(t) = \omega\sched_{bm}(t) \log_2 \bigg( 1 + \frac{\pow_b(t)|\channel_{bm}(t)|^2}{I_{bm}(t) + \noise} \bigg),
\end{equation}
where $\pow_b(t)\in[0,\pow_b^{\text{max}}]$ is the transmission power of SBS $b$, $|\channel_{bm}(t)|^2$ is the channel gain between SBS $b$ and UE $m$, $I_{bm}(t)=\sum_{\forall b'\in\set{B}\setminus\{b\}}\pow_{b'}(t)|\channel_{b'm}(t)|^2$ is the interference term.

We assume that SBS $b$ sends $q_{bm}(t)$ bits to UE $m\in\set{M}_b$.
Thus, the time evolution of the $b$-th SBS queue, 
is given by:
\begin{equation}\label{eqn:evolution_queue}
	d\vect{q}_b(t) = \vect{\arrival}_b(t) - \vect{\rate}_{b}\big(t,\vectY(t),\vect{\channel}(t)\big) dt,
\end{equation}
where $\vect{\arrival}_b(t)$ and $\vect{\rate}_{b}(\cdot)$ are the vectors of arrivals and transmission rates at SBS $b$ and is the vector of channel gains.
Moreover, a vector of control variables $\vectY(t)=\big[\vecty_b(t),\vecty_{-b}(t)\big]$ is defined such that $\vecty_b(t)=\big[\vect{\sched}_b(t),{\pow}_b(t)\big]$ is the SBS local control vector and $\vecty_{-b}(t)$ is the control vector of interfering SBSs.
The evolution of the channels are assumed to vary according to the following known stochastic model~\cite{book:rappaport02}:
\begin{equation}\label{eqn:evolution_channels}
	d\vect{\channel}_{b}(t) = \vect{G}\big( t, \vect{\channel}_{b}(t)\big) dt + \zeta d\vect{z}_{b}(t),
\end{equation}
where the deterministic part $\vect{\channel}_{b}(t)\big)=[G\big( t, \channel_{bm}(t)\big)]_{m\in\set{M}}$ considers path loss and shadowing while the random part $\vect{z}_{b}(t)=[z_{bm}(t)]_{m\in\set{M}}$ with positive constant $\zeta$ includes fast fading and channel uncertainties.
The evolution of the entire system can be described by the QSI and the CSI as per (\ref{eqn:evolution_queue}) and (\ref{eqn:evolution_channels}), respectively.
Thus, we define $\vectx(t)= \big[ \vectx_b(t) \big]_{ b\in\set{B} } \in\set{X} $ as the state of the system at time $t$ with $\vectx_b(t)=\big[\vect{q}_b(t), \vect{\channel}_b(t) \big]$ over the state space $\set{X}=(\set{X}_1 \cup \dots \cup \set{X}_{{B}})$.
The feasibility set of SBS $b$'s control at state $\vectx(t)$ is defined as $\set{Y}_b\big(t,\vectx(t)\big)=\{ {\sched}_{bm}\big(t,\vectx(t)\big)\in\{0,1\}, 0 \leq {\pow}_b\big(t,\vectx(t)\big) \leq \pow^{\texth{max}}\}$.
Note that we will frequently use $\vectY(t)$ instead $\vectY\big(t,\vectx(t)\big)$ for notational simplicity.
As the system evolves, UEs need to be scheduled at each time slot based on QSI and CSI.
The service quality of UE is ensured such that the average queue of packets intended for UE $m\in\set{M}_b$ at its serving SBS $b$ is finite, i.e. $\bar{q}_{bm}=\lim_{t\rightarrow\infty}\frac{1}{t}\sum_{\tau=0}^{t-1}q_{bm}(\tau) \leq \infty$.

The objective of this work is to determine the control policy per SBS $b$ which maximizes a utility function $f_b(\cdot)$ while ensuring UEs' quality of service (QoS).
The utility of a SBS at time $t$ is its EE which is defined as  $\one\tran\vect{\rate}_{b}\big(t,\vectY(t),\vect{\channel}(t)\big)/\big( p_b(t)+p_0 )$.
Here, $p_0$ is the fixed circuit power consumption at an SBS~\cite{jnl:shuguang05}.
Let $\bar\vectY=\lim_{t\rightarrow\infty}\frac{1}{t}\sum_{\tau=0}^{t-1}\vectY(\tau)$ be the limiting time average expectation of the control variables $\vectY(t)$.
Formally, the utility maximization problem for SBS $b$ can be written as:
\begin{subequations}\label{eqn:spectrum_sharing_optimization}
\begin{eqnarray}
	\label{eqn:spectrum_sharing_optimization_obj} &\underset{\bar\vecty_b}{\text{maximize}} & f_b(\bar\vecty_b,\bar\vecty_{-b}), \\
	\label{cns:user_QoS}& \text{subject to} & \bar{q}_{bm} \leq \infty \qquad \forall m\in\set{M}_b, \\
	\label{cns:collection}& & (\ref{eqn:evolution_queue}), (\ref{eqn:evolution_channels}), \\
	\label{cns:control_all}& & \vecty_b(t)\in\set{Y}_b(t,\vectx) \qquad \forall t.
\end{eqnarray}
\end{subequations}

Furthermore, we assume that SBSs serve their scheduled UEs for a time period of $T$.
Therefore, we use the notion of \emph{time scale separation} between transmit power allocation and UE scheduling processes, hereinafter.
For SBS $b\in\set{B}$, the transmit power allocation $\pow_b(t)$ is determined for each transmission and thus, is a \emph{fast} process.
However, UE scheduling $\vect{\sched}_b(t)$ is fixed for a duration of $T$ to ensure a stable transmission.
Therefore, UE scheduling is a \emph{slower} process than power allocation.

\subsection{Resource management as a dynamic stochastic game}\label{sec:MFG_stochastic}

We focus on finding a control policy which solves (\ref{eqn:spectrum_sharing_optimization}) over a time period $[0,T]$ for a given set of scheduled UEs considering the state transitions $\vectx(0)\rightarrow\vectx(T)$.
Therefore, we define the following time-and-state-based utility for SBS $b$:
\begin{align}
	\nonumber \util_b \big(0,\vectx(0)\big) &= \util_b \big(0,\vectx(0),\vectY(0)\big) \\
	&= \expect\big[ \int_0^T f_b\big(\tau,\vectx(\tau),\vectY(\tau)\big) d\tau \big].
\end{align}
The goal of each SBS is to maximize this utility over $\vecty_b(\tau) = \big[\vect{\sched}^\star_b(\tau),\pow_b(\tau)\big]$ for a given UE scheduling $\vect{\sched}^\star(\tau)$ subject to the system state dynamics $d\vectx(t) = \vect{X}_t dt + \vect{X}_z d\vect{z}(t)$, $\forall b\in\set{B}, \forall m\in\set{M}$ where: 
\begin{equation*}
\vect{X}_t= \Big[ \arrival_{bm}(t)-\rate_{bm}\big(t,\vectY(t),\vect{\channel}(t)\big), G\big( t, \channel_{bm}(t)\big) \Big]_{m\in\set{M},b\in\set{B}},
\end{equation*}
and $\vect{X}_z=\text{diag}(\zero_{\size{M}},\zeta\one)$.

As the network state evolves as a function of QSI and CSI, the strategies of the SBSs must adapt accordingly.
Thus, maximizing the utility $\util_b \big(0,\vectx(0)\big)$ for $b\in\set{B}$ under the evolution of the network states can be modeled as a dynamic stochastic game (DSG) as follows:
\begin{definition}
	For a given UE scheduling mechanism, the power control problem can be formulated as a dynamic stochastic game
		$\game = (\set{B}, \{\set{Y}_b\}_{b\in\set{B}}, \{\set{X}_b\}_{b\in\set{B}}, \{\util_b\}_{b\in\set{B}})$ where:
		\begin{itemize}
			\item $\set{B}$ is the set of players which are the SBSs.
			\item $\set{Y}_b$ is the set of actions of player $b\in\set{B}$ which encompass the choices of transmit power $\pow_b$ for given scheduled UEs $\vect{\sched}_b$.
			\item $\set{X}_b$ is the state space of player $b\in\set{B}$ consisting QSI $\vect{q}_b$ and CSI $\vect{h}_b$.
			\item $\util_b$ is the average utility of player $b\in\set{B}$ that depends on the state transition $\vectx(t)\rightarrow\vectx(T)$ as follows:
			\begin{equation}\label{eqn:running_utility}
			\util_b\big(t,\vectx(t)\big) = \textstyle\expect \big[ \int_t^T  f_b\big(\tau,\vectx(\tau),\vectY(\tau)\big) d\tau \big].
			\end{equation}
		\end{itemize}
\end{definition}
One suitable solution for the defined DSG is the closed-loop Nash equilibrium (CLNE) defined as follows:
\begin{definition}\label{def:CLNE}
	The control variables $\vectY^\star(t)=\big(\vecty^\star_b(t),\vecty^\star_{-b}(t)\big)\in\set{Y}\big(t,\vectx(t)\big)$ constitute a \emph{closed-loop Nash equilibrium} if, for $\set{Y}\big(t,\vectx(t)\big)=\Big( \set{Y}_1\big(t,\vectx(t)\big) \cup \dots \cup \set{Y}_{\size{B}}\big(t,\vectx(t)\big) \Big)$,
	\begin{multline}\label{eqn:CLNE}
		\textstyle\expect \big[ \int_t^T  f_b\big(\tau,\vectx(\tau),\vecty^\star_b(\tau),\vecty^\star_{-b}(\tau)\big) d\tau \big] 
		\\ \geq \textstyle\expect \big[ \int_t^T  f_b\big(\tau,\vectx(\tau),\vecty_b(\tau),\vecty^\star_{-b}(\tau)\big) d\tau \big],
	\end{multline}
	is satisfied $\forall\ b\in\set{B}$, $\forall\ \vectY(t)\in\set{Y}\big(t,\vectx(t)\big)$ and $\forall\ \vectx(t)\in\set{X}$.
\end{definition}
To satisfy (\ref{eqn:CLNE}), each SBS must have the knowledge of its own state $\vectx_b(t)$ and a feedback on the optimal strategy $\vecty^\star_{-b}(t)$ of the rest of SBSs and thus, the achieved equilibrium is known as a CLNE.
We denote by $\vect{\util}^\star\big(t,\vectx(t)\big) = \big[ \util_b^\star\big(t,\vectx(t)\big) \big]_{ b\in\set{B} }$ the trajectories of the utilities induced by the NE. 
The existence of the NE is ensured by the existence of a joint solution $\vect{\util}\big(t,\vectx(t)\big)=\big[ \util_b\big(t,\vectx(t)\big) \big]_{ \forall b\in\set{B} }$ to the following $\size{B}$ coupled HJB equations~\cite{book:gueant11}:
\begin{multline}\label{eqn:HJB_original}
	\textstyle \frac{\partial}{\partial {t}}[\util_b\big(t,\vectx(t)\big)] + \max\limits_{\vecty_b(t,\vectx(t))} \bigg[ \vect{X}_t \frac{\partial}{\partial {\vectx}}[\util_b\big(t,\vectx(t)\big)] 
	\\+ f_b(t,\vectx(t),\vectY(t)\big) 
	\textstyle  + \frac{1}{2}\text{tr}\Big( \vect{X}_z^2\frac{\partial^2}{\partial {\vectx}^2}[\util_b\big(t,\vectx(t)\big)] \Big) \bigg]  = 0,
\end{multline}
defined for each SBS $b\in\set{B}$ and tr($\cdot$) is the matrix trace operation.

\begin{proposition}\label{prop:ext_CLNE}
	(\emph{Existence of a closed-loop NE in $\game$}) A sufficient condition for the existence of a closed-loop NE in $\game$ is that for all $(v^\star, p^\star)$, we must have $2v^\star(p^\star+p_0) + \beta\ln(1+\beta p^\star) \neq 0$ where $\beta = \frac{|h_{bm}(t)^2|}{I_{bm}(t) + \noise}$ for all $b\in\set{B}$ and their scheduled UEs $m\in\set{M}$.
\end{proposition}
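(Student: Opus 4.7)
The plan is to invoke the standard equivalence between closed-loop Nash equilibria of $\game$ and joint solutions of the coupled HJB system (\ref{eqn:HJB_original}), as formalized in~\cite{book:gueant11}. It suffices to show that, under the stated non-degeneracy condition, the inner maximization in (\ref{eqn:HJB_original}) admits a well-defined maximizer $\vecty_b^\star(t,\vectx)$ for every $b\in\set{B}$, and that the resulting Hamiltonian is regular enough for the backward system to admit a joint solution with terminal condition $\util_b(T,\vectx)=0$.

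First I would carry out the inner maximization. By the time-scale separation between scheduling and power control, $\vect{\sched}_b$ is held fixed while $\pow_b\in[0,\pow^{\text{max}}]$ is optimized. Substituting $\rate_{bm}=\omega\log_2(1+\beta\pow_b)$ with $\beta=|\channel_{bm}|^2/(I_{bm}+\noise)$ and writing $v=\partial\util_b/\partial q_{bm}$ for the queue-costate, the $\pow_b$-dependent content of the bracket in (\ref{eqn:HJB_original}) reduces to the sum of the EE term $\omega\log_2(1+\beta\pow_b)/(\pow_b+p_0)$ and a drift-coupling term $-v\,\omega\log_2(1+\beta\pow_b)$ coming from $\vect{X}_t$. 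Setting the derivative in $\pow_b$ to zero yields an implicit equation for an interior critical point $\pow^\star$; after clearing denominators, the second derivative of the bracket at $\pow^\star$ is proportional, up to a positive factor, to $2v^\star(\pow^\star+p_0)+\beta\ln(1+\beta\pow^\star)$.

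Second, I would use the hypothesis $2v^\star(\pow^\star+p_0)+\beta\ln(1+\beta\pow^\star)\neq 0$ together with the implicit function theorem to conclude that $\pow^\star$ is a locally single-valued, smooth function of $v^\star$ and of the interference hidden in $\beta$. Since the EE is only quasi-concave rather than concave in $\pow_b$, this non-degeneracy is precisely what rules out inflection/saddle critical points and guarantees a unique interior argmax; boundary cases $\pow^\star\in\{0,\pow^{\text{max}}\}$ are treated separately by comparing endpoint values. The scheduling component $\vect{\sched}_b^\star$ is chosen combinatorially from a finite set, so the full maximizer $\vecty_b^\star$ exists and is measurable in $(t,\vectx)$.

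Third, with $\vecty_b^\star$ determined pointwise, (\ref{eqn:HJB_original}) becomes a semilinear backward parabolic system for $\{\util_b\}_{b\in\set{B}}$ whose maximized Hamiltonian is Lipschitz in $(\vectx,\partial\util_b/\partial\vectx)$. A joint classical solution, or a viscosity solution where boundary non-differentiability occurs, then exists by standard HJB results. The strategies extracted via $\vecty_b^\star=\argmax$ satisfy Definition~\ref{def:CLNE} by the usual verification argument, establishing the CLNE. The hardest part will be step two: confirming algebraically that $2v^\star(\pow^\star+p_0)+\beta\ln(1+\beta\pow^\star)$ really is (a positive multiple of) the second derivative of the inner Hamiltonian at the critical point, and ruling out non-interior or multiple maximizers given the quasi-concavity of EE in $\pow_b$; propagating the pointwise regularity of $\vecty_b^\star$ to global well-posedness of the coupled HJB system is the second technical bottleneck.
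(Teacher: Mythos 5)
Your proposal follows essentially the same route as the paper: reduce existence of the CLNE to regularity (smoothness) of the maximized Hamiltonian in the coupled HJB system, derive the interior first-order condition for the power variable, and apply the implicit function theorem, with the stated condition being exactly the non-vanishing of $\partial\mathbb{F}/\partial p$ at the critical point of the cleared-denominator equation $\mathbb{F}(v,p)=v(p+p_0)^2-\beta(p+p_0)+(1+\beta p)\ln(1+\beta p)=0$ --- which, up to the positive factor $(1+\beta p^\star)(p^\star+p_0)^2$, is the (negated) second derivative of the inner objective that you identify. The only differences are cosmetic: the paper absorbs $\beta$ into its costate variable, setting $v=\beta\,\partial\util_b/\partial q_b$ (so your $v$ differs from the paper's by a factor of $\beta$), and it delegates your third step (well-posedness of the backward system once the maximized Hamiltonian is smooth) to citations rather than arguing it.
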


\begin{proof}
	Proving the existence of a solution to the HJB equation 
	 given in (\ref{eqn:HJB_original}) each SBS provides a sufficient condition for the existence of a NE for the DSG~\cite{book:gueant11}.
	Thus, there exists a solution if the function
	\begin{multline}
		\textstyle F\big(\vectx(t), \vecty_{-b}(t), \frac{\partial [\util_b\left(t,\vectx(t)\right)]}{\partial {q_b}} \big) = \max\limits_{\vecty_b(t,\vectx(t))} \bigg[ f_b((t,\vectx(t),\vectY(t)\big) 
		\\- r_{bm}(t,\vectx(t),\vectY(t) \big) \frac{\partial[\util_b(t,\vectx(t))]}{\partial {q_b}} \bigg],
	\end{multline} 
	is smooth~\cite{book:evans10}.
	According to (\ref{eqn:running_utility}), the average utility $\util_b\left(t,\vectx(t)\right)$ that depends on the state transition $\vectx(t)\to\vectx(T)$ is independent of the current choice of controls $\vecty_b(t)$.
	Therefore, the terms $\frac{\partial[\util_b(t,\vectx(t))]}{\partial {q_b}}$, $\frac{\partial [\util_b\left(t,\vectx(t)\right)]}{\partial {\vect{\channel}_b}}$ and $\frac{\partial^2 [\util_b\left(t,\vectx(t)\right)]}{\partial {\vect{\channel}^2_b}}$ become constant with respect to $\vecty_b(t)$ and thus, are neglected since they do not impact on $F(\cdot)$.
	Considering the first derivative, the maximizer needs to satisfy
	$ \frac{\partial f_b}{\partial {y\vecty_b}} - \frac{\partial r_b}{\partial {\vecty_b}} \frac{\partial \util_b}{\partial {q_b}} = 0$ which yields,
\begin{equation}\label{eqn:NE_derivative}
	 v (p + p_0)^2 =   \beta (p + p_0) - (1+\beta p)  \log(1+\beta p),
\end{equation}
	where $p$ is the transmit power of the respective SBS and $v = \frac{\partial \util_b}{\partial {q_b}} \beta$.
	Note that we have omitted the dependencies over time and states for notational simplicity.

	In order to evaluate the smoothness of $F\big(\vectx(t), \vecty_{-b}(t), \frac{\partial [\util_b\left(t,\vectx(t)\right)]}{\partial {q_b}} \big)$, we use the implicit function theorem~\cite{jnl:oliveira12}.
	Thus, we define a function $\mathbb{F}(v,p) = v (p + p_0)^2 - \beta (p + p_0) + (1+\beta p)  \log(1+\beta p)$ and check whether $\mathbb{F}(v,p) = 0$ is held.
	If $\mathbb{F}(v,p) = 0$ is held, there exists $\varphi:\Re \to \Re$ such that $p^\star = \varphi(v^\star)$ and $\frac{\partial [\varphi(v^\star)]}{\partial v} = {\frac{\partial [\mathbb{F}(v^\star,p^\star)]}{\partial v}}/{\frac{\partial [\mathbb{F}(v^\star,p^\star)]}{\partial p}} < \infty$.
	Since $\frac{\partial [\varphi(v^\star)]}{\partial v} = \frac{(p^\star+p_0)^2}{2v^\star(p^\star + p_0) + \beta \ln(1 + \beta p^\star)}$, the smoothness is ensured by 
	\begin{equation}\label{eqn:NE_sufficient_condition}
		2v^\star(p^\star + p_0) + \beta \ln(1 + \beta p^\star) \neq 0.
	\end{equation}
	
	Note that we consider $\frac{\partial \util_b}{\partial {q_b}} > 0$ and thus, $v>0$ and $v$ is bounded above, i.e. $0<v<\infty$.
	A bounded $v$ ensures the solution to (\ref{eqn:NE_derivative}) remains finite (finite transmit power) while $v>0$ guarantees  $2v^\star(p^\star + p_0) + \beta \ln(1 + \beta p^\star)>0$ for all feasible transmit powers. 
	Thus, the above sufficient condition (\ref{eqn:NE_sufficient_condition}) is held for all the particular choices of $v$ and $p$.
\end{proof}

Solving $\size{B}$ mutually coupled HJB equations is complex when $\size{B}>2$.
Furthermore, it requires gathering QSI and CSI from all the SBSs throughout the network which incurs a tremendous amount of information exchange.
Hence, it is impractical for UDNs with large $\size{B}$.
However, as $\size{B}$ grows extremely large, it can be observed that the set of all players transform into a continuum and their individuality becomes indistinguishable. 
In such a scenario, we can solve the dynamic stochastic game using the tools of mean-field theory, as explained next.

 \section{Optimal Power Control: A Mean-Field Approach}\label{sec:MFG}

As the number of SBSs becomes large ($\size{B}\rightarrow\infty$), we assume that the interference tends to be bounded in order to have non-zero rates as observed in \cite{pap:mari12,jnl:meriaux13} and \cite{book:couillet11}.
Moreover, each SBS implements a transmission policy based on the knowledge of its own state.
SBSs in such an environment are indistinguishable from one another thus, resulting in a continuum of players.
This is a reasonable assumption in UDNs because the impact of an individual SBS's decision on the system is negligible from the macroscopic view unless such a decision is adopted by a considerable large portion of the population.
As an example, the choice of a SBS to increase its energy consumption by 1 Watt in a system consisting of thousands of SBSs consuming couple of megawatts is insignificant.
However, it is a significant increment of energy consumption if the same decision is made by $10\%$ of the population together.
Remarkably, increasing the number of SBSs infinitely large and allow them to become a continuum
allows us to simplify the solution of all $\size{B}$ HJB equations by reducing them to two equations as discussed below.

At a given time and state $\big(t,\vectx(t)\big)$ and for a scheduled UE $m$, the impact of other SBSs on the choice of a given SBS $b\in\set{B}$ appears in the interference term:
\begin{equation*}
	I_{bm}\big(t,\vectx(t)\big) = \textstyle\sum_{\forall b'\in\set{B}\setminus\{b\}}\pow_{b'}\big(t,\vectx(t)\big)|\channel_{b'm}(t)|^2.
\end{equation*}
As the number of SBSs grows large, we consider that the interference is bounded as shown in \cite{book:couillet11,book:hekmat06,jnl:hekmat08} and \cite{jnl:haenggi09} since the path loss is modeled according to the inverse-square law.
Thus, the interference term can be rewritten by using a normalization factor for the channels~\cite{book:couillet11}.
Let $\eta/\size{B}$ be the normalization factor where $\eta$ is SBS density and thus, the channel gain becomes $\channel_{bm}(t) = \frac{\sqrt{\eta} \tilde{\channel}_{bm}(t)}{\sqrt{\size{B}}}$ with $\expect[|\tilde{\channel}_{bm}(t)|^2]=1$.
Thus, 
the interference can be rewritten as (\ref{eqn:long1})
\begin{figure*}
	\begin{equation}\label{eqn:long1}
	\begin{aligned}
	I_{bm}\big(t,\vectx(t)\big) &= \frac{\eta}{\size{B}} \sum_{\forall b'\in\set{B}\setminus\{b\}}\pow_{b'}\big(t,\vectx(t)\big)|\vectxx(t)|^2 
	= \frac{\eta}{\size{B}} \sum_{\forall b'\in\set{B}}\pow_{b'}\big(t,\vectx(t)\big)|\vectxx(t)|^2 - \frac{\eta}{\size{B}} \pow_{b}\big(t,\vectx(t)\big)|\tilde{\channel}_{bm}(t)|^2 \\
	&= {\eta} \int_{\set{X}}\pow_{b'}(t,\vectx_{b'})|\tilde{\channel}_{b'm}|^2 {\mass}^{\size{B}}(t,\vectx_{b'}) d\vectx_{b'}- \frac{\eta}{\size{B}} \pow_{b}\big(t,\vectx(t)\big)|\tilde{\channel}_{bm}(t)|^2.
	\end{aligned}
	\end{equation}
	\rule{\textwidth}{.5pt}
\end{figure*}
with a SBS state distribution $\vect{\mass}^{\size{B}}(t) = [ \mass^{\size{B}}\big(t,\vectx') ]_{\vectx'\in\set{X}}$ where
\begin{equation}\label{eqn:mass_distribution_finite}
\mass^{\size{B}}\big(t,\vectx') =  \frac{1}{\size{B}} \textstyle\sum\limits_{b=1}^{\size{B}} \delta\big(\vectx_b(t)=\vectx'\big).
\end{equation}
Here, $\delta(\cdot)$ is the \emph{Dirac delta} function and $\vectx'\in\set{X}$.

Now, we assume that each SBS has only the knowledge of its own state and, thus, it will implement a homogeneous transmission policy, i.e. $p_b\big(t,\vectx(t)\big) = \policy \big( t, \vectx_b(t) \big)$ for all $b\in\set{B}$, based the knowledge that it has available.
Furthermore, the transmission policy satisfies $\expect [\int_0^T \policy^2\big( t, \vectx_b(t) \big) dt]\allowbreak<\infty$.
With the above assumptions, the SBS states $\vectSubTwoTime{\vectx}{\size{B}}{t}$ for all $t$ become exchangeable under the transmission policy $\policy$~\cite{book:koch82,book:gagliardini14},
i.e.,
\begin{equation}
\set{L}\big(\vectSubTwoTime{\vectx}{\size{B}}{t}\big) = \set{L}\big(\vectx_{\pi(1)},\dots,\vectx_{\pi(\size{B})}\big) \qquad \text{for }\policy,\forall t,
\end{equation} 
where $\set{L}(\cdot)$ is any joint rule of the SBS states and $\pi(\cdot)$ produces any permutation over $\{\seta{\size{B}}\}$.
Due to this exchangeability, the game $\game$ now consists of a set of generic players including a generic SBS with the state $\vectxx(t)$ at time $t$.
Thus, the interference under the homogeneous transmission policy $\policy$ with respect to the state distribution $\vect{\mass}^{\size{B}}(t)$ will be given by:
\begin{multline}\label{eqn:mf_intef_interm}
	I\big(t,\vect{\mass}^{\size{B}}(t)\big) = I_{bm}\big(t,\vectx(t)\big) = {\eta} \int_{\set{X}}\policy\tNx|\tilde{\channel}|^2 \mass^{\size{B}}\tNx dx 
	\\- \frac{\eta}{\size{B}} \policy\big(t,\vectx(t)\big)|\tilde{\channel}_{bm}(t)|^2.
\end{multline}
As $\size{B}\to\infty$, the set of players becomes a continuum. 
Since the transmission policy $\policy\big(t,\vectxx(t)\big)$ preserves the indistinguishable property (exchangeability in SBS states), there exists a limiting distribution $\vect{\mass}(t)$ such that~\cite{pap:tembine11,jnl:meriaux13},
\begin{equation}
	\vect{\mass}(t) = \lim_{\size{B} \to \infty} \vect{\mass}^{\size{B}}(t).
\end{equation}
This limiting distribution of the states is defined as the MF.

\begin{figure}[t]
	\centering
	\includegraphics[width=.9\columnwidth]{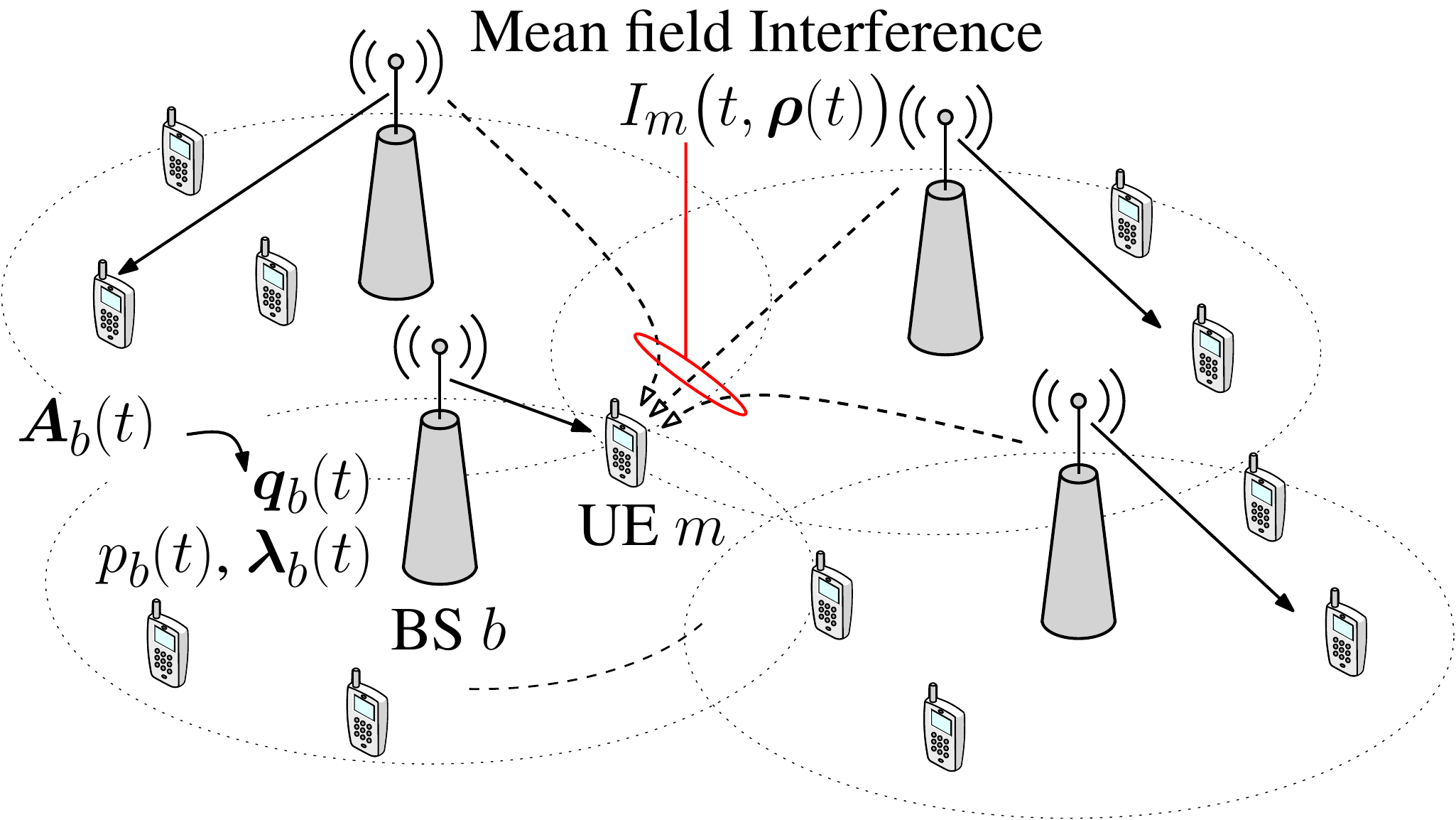}
	\caption{Mean-field interference from the perspective of an SBS. As the number of interfering SBSs grows large, the interference observed at a generic UE becomes independent of individual states and transmission policies of SBSs and only depends on the time $t$ and the limiting distribution $\vect{\mass}(t)$.}
	\label{fig:system_model}
\end{figure}

Furthermore, when $\expect[|\tilde{\channel}_{bm}(t)|^2]=1$, the channel gain $|\tilde{\channel}_{bm}(t)|^2$ is a finite quantity for any given time $t$, i.e. $|\tilde{\channel}_{bm}(t)|^2<\infty$.
Since the homogeneous transmission policy satisfies $\expect [\int_0^T \policy^2\big( t, \vectx(t) \big) dt]<\infty$ and is bounded by a maximum transmit power, we have $\policy\big(t,\vectx(t)\big)<\infty$. 
Therefore, 
\begin{multline*}
\policy\big(t,\vectx(t)\big)<\infty, |\tilde{\channel}_{bm}(t)|^2<\infty  \\
\begin{aligned}
	 & \implies \policy\big(t,\vectx(t)\big)|\tilde{\channel}_{bm}(t)|^2<\infty, \\
	 & \implies \frac{\eta}{\size{B}} \policy\big(t,\vectx(t)\big)|\tilde{\channel}_{bm}(t)|^2 < \infty, \\
	 &\implies \lim\limits_{\size{B}\to\infty } \frac{\eta}{\size{B}} \policy\big(t,\vectx(t)\big)|\tilde{\channel}_{bm}(t)|^2 = 0.
\end{aligned}
\end{multline*} 
Applying this result on (\ref{eqn:mf_intef_interm}),
as $\size{B}\to\infty$, the interference
\begin{equation}\label{eqn:mean_field_interference}
	I\big(t,\vect{\rho}(t)\big) = {\eta} \int_{\set{X}} \pow\big(t,\vectx'\big)|\tilde{\channel}|^2\rho(t,\vectx') d\vectx',
\end{equation}
converges to a finite value which depends on $\vect{\mass}^{\size{B}}(t)$.
With the convergence of the interference along the indistinguishable transmission policy and SBS states, the DSG $\game$ converges to an MFG~\cite{pap:tembine11}.
A simplified illustration of the system is shown in Fig.~\ref{fig:system_model}.

Thus, the utility maximization problem and the evolution of the states are reformulated for a generic SBS as follows:
\begin{subequations}\label{eqn:utility_maximization_MFG}
\begin{eqnarray}
	&\underset{ \big(\pow(t)|\sched^\star(t)\big), \forall t\in[0,T] }{\text{maximize}} & \util \big(0,\vectx(0)\big), \\
	&\text{subject to} & dx(t) = X_t dt + X_z d\vect{z}(t), \\
	& & \vecty\big(t,\vectxx(t)\big)\in\set{Y}\big(t,\vectx(t)\big),
\end{eqnarray}
\end{subequations}
for all $t\in[0,T]$
where 
	${X}_t=\big[D\big(t,y(t)\big), G\big( t, \tilde{\channel}(t)\big) \big]$ and
a diagonal matrix ${X}_z=\text{diag}(0,\zeta\one)$. 
Here, $D\big(t,\vecty(t)\big)=\arrival(t) - \rate(t,\vecty(t),\tilde{\channel}(t),\vect{\rho}(t)\big)$.
Similarly, the utility of a generic SBS $\util\big(t,\vectxx(t)\big)$ follows (\ref{eqn:running_utility}) with the necessary modifications.
The formal definition of the MF equilibrium is as follows:
\begin{definition}\label{def:MFE}
	The control vector $\vecty^\star=(\vect{\sched}^\star,\pow^\star)\in\set{Y}\big(t,\vectx(t)\big)$ constitutes a \emph{mean-field equilibrium} if, for all $\vecty\in\set{Y}\big(t,\vectx(t)\big)$ with the MF distribution $\vect{\rho}^\star$, it holds that,
\begin{equation*}
	\util(\vecty^\star,\vect{\rho}^\star) \geq \util(\vecty,\vect{\rho}^\star).
\end{equation*}
\end{definition}
In the MF framework, the MF equilibrium given by the solution $\big[\util^\star\big(t,\vectxx(t)\big) , \rho^\star\big(t,\vectxx(t)\big)\big]$ of (\ref{eqn:utility_maximization_MFG}) is equivalent to the NE of the $\size{B}$-players DSG~\cite{book:gueant11}.
Moreover, the optimal trajectory $\util^\star\big(t,\vectxx(t)\big)$ is found by applying backward induction to a single HJB equation and 
the MF (limiting distribution) $\rho^\star\big(t,\vectxx(t)\big)$ is obtained by forward solving the FPK equation as (\ref{eqn:MFG_PDEs}),
\begin{figure*}
	\begin{equation}\label{eqn:MFG_PDEs}
	\begin{cases}
	\frac{\partial}{\partial {t}}[\util\big(t,\vectxx(t)\big)] + \max\limits_{\pow(t,\vectxx(t))} \bigg[ D\big(t,\vecty^\star(t)\big) \frac{\partial}{\partial {q}}[\util\big(t,\vectxx(t)\big)] + f(t) + \Big( G( t, \tilde{\channel}) \frac{\partial}{\partial {\tilde{\channel}}}  + \frac{\zeta^2}{2} \frac{\partial^2}{\partial {\tilde{\channel}}^2} \Big) \util\big(t,\vectxx(t)\big) \bigg] = 0, \\
	\frac{\partial}{\partial {\tilde{\channel}}}\Big[G(t,\tilde{\channel})\rho\big(t,\vectxx(t)\big)\Big]  - \frac{\zeta^2}{2} \frac{\partial^2}{\partial {\tilde{\channel}}^2}[\rho\big(t,\vectxx(t)\big)]  + \frac{\partial}{\partial {q}}\Big[ D\big(t,\vecty^\star(t)\big) \rho\big(t,\vectxx(t)\big) \Big] + \partial_{t}\rho\big(t,\vectxx(t)\big)  = 0,
	\end{cases}
	\end{equation}
	\rule{\textwidth}{.5pt}
\end{figure*}
respectively.
The optimal transmit power strategy is given by,
\begin{multline}\label{eqn:optimal_strategy_MFG}
	\textstyle \pow^\star\big(t,\vectxx(t)\big) = \argmax\limits_{\pow(t,\vectxx(t))} \bigg[ X_t \frac{\partial}{\partial x}[\util\big(t,\vectxx(t)\big)] + f(t) 
	\\ \textstyle  + \frac{1}{2}\text{tr}\Big( X_z^2\frac{\partial^2}{\partial x^2}[\util\big(t,\vectxx(t)\big)] \Big) \bigg].
\end{multline}
Solving (\ref{eqn:MFG_PDEs})-(\ref{eqn:optimal_strategy_MFG}) yields the behavior of a generic SBS in terms of transmission power, utility and state distribution.

The main motivation behind this mean-field analysis is to reduce the complexity of the $B$-player DSG.
To solve the original DSG with $B$ SBSs/players presented in Section \ref{sec:MFG_stochastic}, one must solve $B$ HJB PDEs given by (\ref{eqn:HJB_original}).
However, the solution of the MFG with large $B$ can be found by simply solving the two coupled HJB-FPK PDEs rather solving $B$ coupled HJB PDEs as in the original problem. 
Clearly, this reduces complexity considerably and can scale well with the number of players.
The relation between the solution of $B$-player DSG ($\vectY^\star$) and the solution of MFG ($\vectY^\star_0$) is $\vectY^\star = \vect{Y}^\star_0 + \sum_{n=1}^\infty \frac{\vectY_n}{B^n}$ where $\vectY_n$ is the $n$-th order correction coefficient~\cite{book:gueant11}.
It should be noted that the solution of the MFG is equivalent to the solution of the $B$-player DSG only for large $B$.
Therefore, it could be challenging to solve the HJB-FPK coupled PDEs for small number of SBSs and thus, the proposed method is not applicable for highly sparse networks.
As the number of SBSs increases, the solution of the DSG tends towards the solution of the MFG.
Therefore, for dense to ultra dense networks, the complexity of the proposed method is much lower compared to solving the original $B$-player DSG.
In fact, the complexity of the proposed method remains constant for all large $B$.
Fig.~\ref{fig:converg_plot_man} illustrate the complexity of the proposed method by the number of iterations needed to solve the HJB-FPK coupled PDEs for different number of SBSs.
\begin{figure}[!t]
	\centering
	\includegraphics[width=\myfigfactorx\columnwidth]{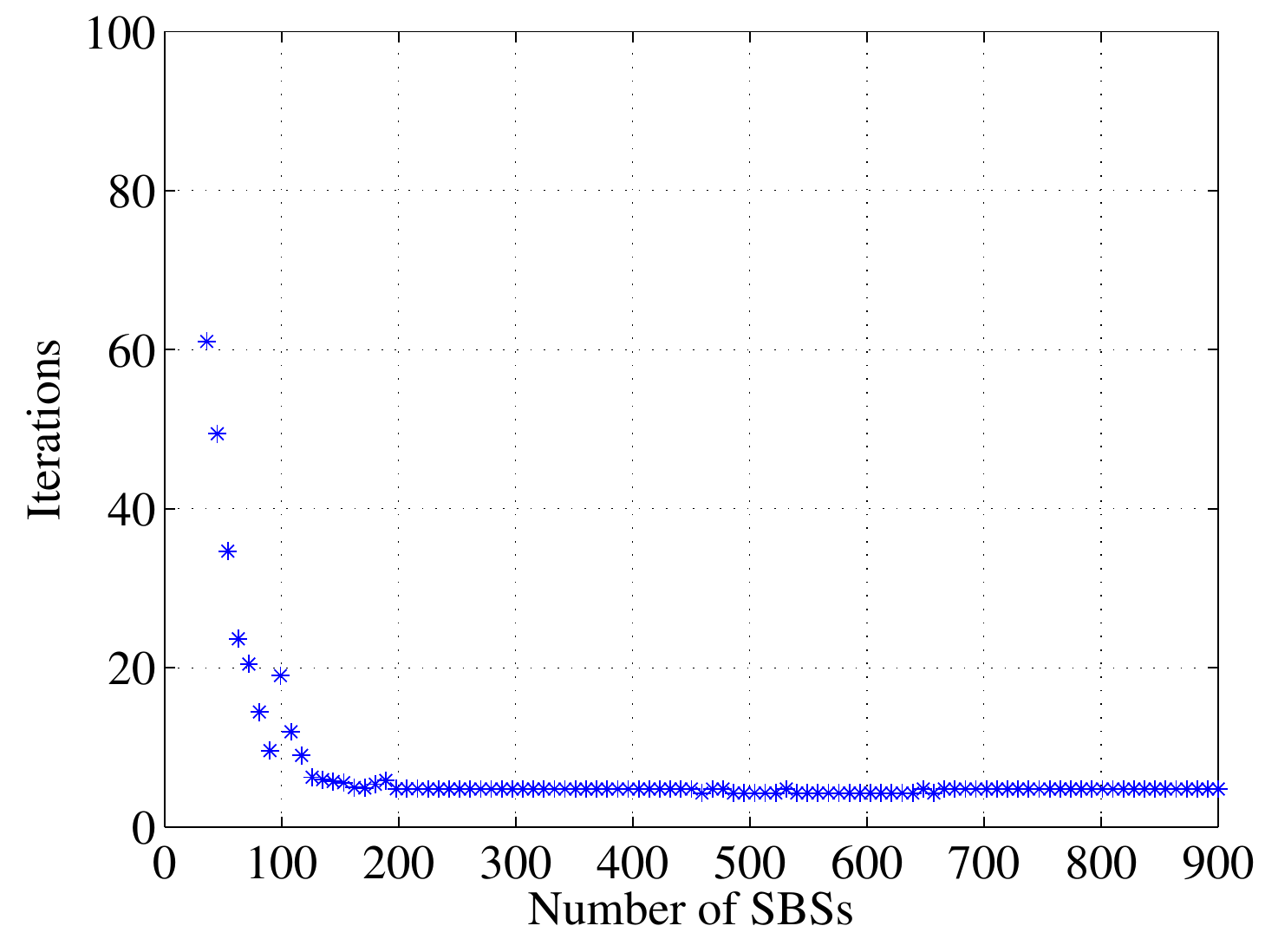}
	\caption{Number of iterations to solve the HJB-FPK PDEs for different number of SBSs in the system.
	The plot consists of a set of results for different number of SBSs  over a fixed area of $0.5625~\text{km}^2$}
	\label{fig:converg_plot_man}
\end{figure}
Here, it can be observe that for systems with large number of SBSs, the MFE is achieved by a small number of iterations and thus, the complexity remains fixed.

\section{UE Scheduling Via Lyapunov Framework}\label{sec:formulations}

\begin{figure}[t]
	\centering
	\includegraphics[width=.9\columnwidth]{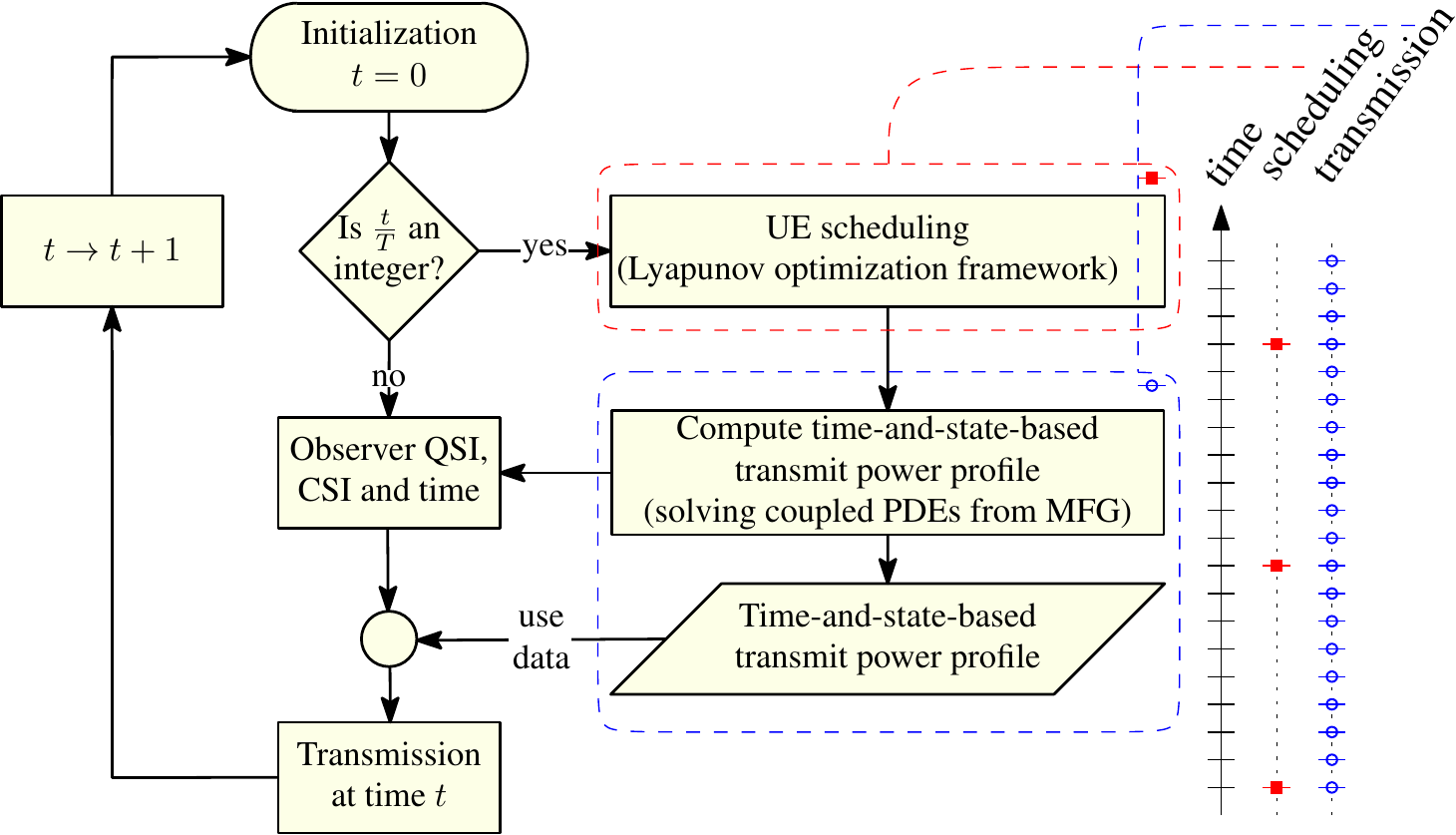}
	\caption{Inter-relation between the MFG and the Lyapunov framework.}
	\label{fig:flow_chart}
\end{figure}

By using  time scale separation, the scheduling variables are decoupled from the mean-field game and thus, can be optimized separately.
Some of the existing baseline approaches for UE scheduling are proportional fair (PF) scheduling in terms of rates, best-CSI based UE scheduling, and scheduling based on highest QSI~\cite{jnl:huang11,jnl:neely10,jnl:kwan09} and \cite{jnl:fritzsche15}.
PF scheduling ensures the fairness among UEs in terms of their average rates (history) while the latter methods exploit the instantaneous CSI or QSI.
Using such conventional schedulers for solving (\ref{eqn:spectrum_sharing_optimization}) within a UDN context will not properly account for the inherent CSI and QSI dynamics over space and time, thus yielding poor performance.
Therefore, we solve the original utility maximization problem of SBS $b\in\set{B}$ with respect to the scheduling variables as follows;
\begin{subequations}\label{eqn:scheduling_optimization}
\begin{eqnarray}
	&\underset{\big( \bar{\vect{\sched}}_b|\pow^\star,\vect{\rho}^\star \big)}{\text{maximize}} & f_b(\bar\vecty_b,\bar\vecty_{-b}), \\
	\label{cns:schedule_constraints_all}&\text{subject to} & (\ref{cns:user_QoS}), (\ref{cns:collection}), \\
	\label{cns:schedules}& & \vect{\sched}_b(t) \in\set{L}\big(t,\vectx(t)\big) \qquad \forall t.
\end{eqnarray}
\end{subequations}

Here, the feasible set $\set{L}\big(t,\vectx(t)\big)$ consists of all the vectors with $\sched_{bm}(t)\in[0,1]$ and $\one\tran\vect{\sched}_b(t) = 1$.
Note that the scheduling variables are relaxed from integers to real numbers for the ease of analysis.
Furthermore, we consider UE $m\in\set{M}_b$ is expected to be served by SBS $b$ with a data rate of $\hat{\rate}_{bm}(t)=\sched_{bm}(t) \tilde{\rate}_{bm}(t)$.
\begin{proposition}
	Given that the transmit power profile $\policy^\star\big(t,\vectxx(t)\big)$ and the state distribution $\vect{\mass}^\star(t)$, the expected data rate of UE $m\in\set{M}_b$ at time $t$ is calculated as follows:
	\begin{equation}\label{eqn:datarate_ue_expected}
		\tilde{\rate}_{bm}(t) = \omega \log_2 \bigg( 1 + \frac{  \policy^\star\big(t,\vectx_b(t)\big)|\tilde{h}_{bm}(t)|^2 }{ I\big(t,\vect{\rho}^\star(t)\big) + \noise } \bigg),
	\end{equation}
	assuming that UE $m$ is scheduled by SBS $b$.
\end{proposition}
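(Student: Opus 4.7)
The plan is to derive the expected rate expression by simply substituting the mean-field quantities established earlier in Section \ref{sec:MFG} into the instantaneous rate formula (\ref{eqn:datarate_ue}), and then specializing to the case where UE $m$ is scheduled by SBS $b$ (so that $\sched_{bm}(t)=1$ is conditioned on).

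First I would start from the definition of the instantaneous rate in (\ref{eqn:datarate_ue}) and condition on the event that UE $m\in\set{M}_b$ is scheduled by SBS $b$, which removes the indicator $\sched_{bm}(t)$ from the prefactor. Next I would substitute the three mean-field quantities into the SINR: (i) the transmit power of SBS $b$ becomes $\policy^\star\big(t,\vectx_b(t)\big)$ because, under the MFG analysis, every SBS implements the same homogeneous transmission policy $\policy^\star$ evaluated at its own state; (ii) the channel gain between SBS $b$ and UE $m$ is the normalized fast-fading quantity $|\tilde{\channel}_{bm}(t)|^2$ introduced via $\channel_{bm}(t) = \sqrt{\eta/\size{B}}\,\tilde{\channel}_{bm}(t)$; (iii) the aggregate interference $I_{bm}(t)$ is replaced by its mean-field limit $I\big(t,\vect{\rho}^\star(t)\big)$ given in (\ref{eqn:mean_field_interference}).

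The key justification for step (iii) is the limiting argument already carried out between (\ref{eqn:mf_intef_interm}) and (\ref{eqn:mean_field_interference}): the self-term $\frac{\eta}{\size{B}}\policy^\star\big(t,\vectx(t)\big)|\tilde{\channel}_{bm}(t)|^2$ vanishes as $\size{B}\to\infty$ because both $\policy^\star$ and $|\tilde{\channel}_{bm}(t)|^2$ are finite, while the sum over all other SBSs converges to the integral expression under the empirical distribution $\vect{\mass}^{\size{B}}(t)$, whose limit is the MF $\vect{\mass}^\star(t)$. Thus the interference observed by UE $m$ depends only on $t$ and $\vect{\mass}^\star(t)$, not on the individual identities of the interfering SBSs, which is exactly the content needed to replace $I_{bm}(t)$ by $I\big(t,\vect{\rho}^\star(t)\big)$.

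The main obstacle is really just bookkeeping: once the exchangeability argument of Section~\ref{sec:MFG} has been invoked and the homogeneous policy $\policy^\star$ together with the MF distribution $\vect{\mass}^\star$ have been substituted in, the expression (\ref{eqn:datarate_ue_expected}) follows directly by reading off the resulting SINR inside the $\log_2$ and multiplying by the bandwidth $\omega$. No further calculation or convergence argument beyond what has already been established for the interference is needed, so the proposition is essentially a notational consolidation rather than a new technical result.
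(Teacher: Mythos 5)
Your proposal is correct and follows essentially the same route as the paper's own proof: both substitute the mean-field quantities (the homogeneous optimal policy $\policy^\star$ evaluated at the SBS's own state, the normalized channel gain, and the limiting interference $I\big(t,\vect{\rho}^\star(t)\big)$ obtained from the FPK solution) into the instantaneous rate expression (\ref{eqn:datarate_ue}) and condition on UE $m$ being scheduled. Your write-up is somewhat more explicit about why the self-interference term vanishes and why the empirical sum converges to the integral, but these points were already established in Section~\ref{sec:MFG} and the paper's proof simply cites them, so there is no substantive difference.
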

\begin{proof}
	As discussed in Section~\ref{sec:MFG}, as $\size{B}\to\infty$, all SBSs will follow the optimal transmission policy given in (\ref{eqn:optimal_strategy_MFG}).
	Moreover, solving (\ref{eqn:MFG_PDEs}) yields the MF $\vect{\mass}^\star(t)$ and thus, the expected interference becomes $I\big(t,\vect{\rho}^\star(t)\big)$.
	For any given time $t$, since all the SBSs aware of their own states, the knowledge of $\vect{q}_b(t)$ and $h_{bm}(t)$ is available at SBS $b$.
	Therefore, as UE $m$ is scheduled by SBS $b$, the data rate given in (\ref{eqn:datarate_ue}) becomes (\ref{eqn:datarate_ue_expected}).
\end{proof}

In order to solve the stochastic optimization problem (\ref{eqn:scheduling_optimization}) per SBS $b$, the \emph{drift plus penalty} (DPP) approach in Lyapunov optimization framework can be applied~\cite{book:neely10,book:georgiadis06}.
The Lyapunov DPP approach decomposes the stochastic optimization problem into sub-policies that can be implemented in a distributed way.
Therefore, $\size{B}$ copies of problem (\ref{eqn:scheduling_optimization}) are locally solved at each SBS and, thus, the proposed solution can apply for a large number of SBSs i.e., as $\size{B}\rightarrow\infty$.

First, a vector of auxiliary variables $\vect{\auxSched}_b(t)=\big[ \auxSched_{bm}(t) \big]_{ m\in\set{M}_b }$
is defined to satisfy the constraints (\ref{cns:schedules}).
These additional variables are chosen from a set $\set{V}=\{\vect{\auxSched} | \vect{\auxSched}\tran\one=1~\text{and}~\zero\preceq\vect{\auxSched}\preceq\one \}$ independent from both time and state.
Thus, (\ref{eqn:scheduling_optimization}) is transformed as follows;
\begin{subequations}\label{eqn:scheduling_opt_equivalent}
	\begin{eqnarray}
	&\underset{\bar{\vect{\sched}}_b,\bar{\vect{\auxSched}}_b}{\text{maximize}} & f_b(\bar\vecty_b,\bar\vecty_{-b}), \\
	\label{cns:schedule_eq_constraints_all}&\text{subject to} &  (\ref{cns:schedule_constraints_all}), (\ref{cns:schedules}), \\
	\label{cns:aux_equivalence}& & \bar{\vect{\auxSched}}_b = \bar{\vect{\sched}}_b, \\
	\label{cns:aux_schedules}& & \vect{\auxSched}_b(t) \in\set{V} \qquad \forall t.
	\end{eqnarray}
\end{subequations}
To satisfy the constraint in (\ref{cns:aux_equivalence}), we introduce a set of virtual queues $\vqSched_{bm}(t)$ for each associated UE $m\in\set{M}_b$. 
The evolution of virtual queues will follow~\cite{book:neely10};
\begin{equation}
\label{eqn:virtual_queue_scheduling} \vqSched_{bm}(t+1) = \vqSched_{bm}(t) + \auxSched_{bm}(t) - \sched_{bm}(t).
\end{equation}
Consider the combined queue  $\Xi_b(t)=\big[ \vect{q}_b(t),\vect{\vqSched}_b(t) \big]$
and its quadratic Lyapunov function $L\big(\Xi_b(t)\big) = \frac{1}{2}\Xi_b\tran(t)\Xi_b(t)$.
Modifying (\ref{eqn:evolution_queue}) considering a chunk of time, the evolution of the queue of UE $m\in\set{M}_b$ can be reformulated as $q_{bm}(t+1) = \max \big(0, q_{bm}(t) + \arrival_{bm}(t) - \hat{\rate}_{bm}(t)\big)$.
Thus, one-slot drift of Lyapunov function $\Delta L = L\big(\Xi_b(t+1)\big)-L\big(\Xi_b(t)\big)$ is given by,
\begin{equation*}
\Delta L \!\! = \!\! \textstyle\frac{ \big( \vect{q}_b\tran(t+1)\vect{q}_b(t+1) - \vect{q}_b\tran(t)\vect{q}_b(t) \big)
	+ \big( \vect{\vqSched}_b\tran(t+1)\vect{\vqSched}_b(t+1) - \vect{\vqSched}_b\tran(t)\vect{\vqSched}_b(t) \big)
}{2}.
\end{equation*}
Neglecting the indices $b,~m$ and $t$ for simplicity and using,
\begin{eqnarray*}
	&([q + \arrival - \hat{\rate}]^+)^2 &\leq q^2 + (\arrival-\hat{\rate})^2 + 2 q(\arrival-\hat{\rate}), \\
	&(\vqSched + \auxSched - \sched)^2 &\leq \vqSched^2 + (\auxSched-\sched)^2 + 2 \vqSched(\auxSched-\sched),
\end{eqnarray*}
the one-slot drift can be simplified as follows:
\begin{equation}\label{eqn:drift_bound}
	\Delta L \leq K
	+ \vect{q}_b\tran(t) \big( \vect{\arrival}_b(t)-\vect{\hat{\rate}}_b(t) \big)
	+ \vect{\vqSched}_b\tran(t) \big( \vect{\auxSched}_b(t) - \vect{\sched}_b(t) \big),
\end{equation}
where $K$ is a uniform bound on the term 
$\big( \vect{\arrival}_b(t)-\vect{\hat{\rate}}_b(t)\big)\tran\big( \vect{\arrival}_b(t)-\vect{\hat{\rate}}_b(t)\big) 
+ \big( \vect{\auxSched}_b(t)-\vect{\sched}_b(t)\big)\tran\big( \vect{\auxSched}_b(t)-\vect{\sched}_b(t)\big)$.
The conditional expected Lyapunov drift at time $t$ is defined as $\Delta\big(\Xi_b(t)\big) = \expect[ L\big(\Xi_b(t+1)\big) -  L\big(\Xi_b(t)\big) |  \Xi_b(t) ]$.
Let $V\leq 0$ be a parameter which control the tradeoff between queue length and the accuracy of the optimal solution of (\ref{eqn:scheduling_opt_equivalent}) and $\vect{\sched}_b^{\texth{avg}}(t)=\frac{1}{t}\sum_{\tau=0}^{t-1}\vect{\sched}_b(\tau)$
 be the current running time averages of scheduling
variables.
Introducing a penalty term 
$V \nabla_{\vect{\sched}_b}\tran f\big(\vect{\sched}_b^{\texth{avg}}(t)\big)\allowbreak \expect [ \big(\vect{\sched}_b(t)\big) | \Xi_b(t)]$
 to the expected drift and minimizing the upper bound of the drift DPP,
\begin{multline*}
	K
	+ V \nabla_{\vect{\sched}_b}\tran f\big(\vect{\sched}_b^{\texth{avg}}(t)\big) \expect [ \big(\vect{\sched}_b(t)\big) | \Xi_b(t)]
	\\+ \expect [ \vect{q}_b\tran(t) \big( \vect{\channel}_b(t)-\vect{\hat{\rate}}_b(t) \big) | \Xi_b(t)]
	\\+ \expect [ \vect{\vqSched}_b\tran(t) \big( \vect{\auxSched}_b(t) - \vect{\sched}_b(t) \big) | \Xi_b(t)],
\end{multline*}
yields the control policy of SBS $b$.
Thus, the objective of SBS $b$ is to minimize the below expression given by,
\begin{multline}\label{eqn:join_objective}
\Big[  
\overbrace{\vect{\vqSched}_b\tran(t)\vect{\auxSched}_b(t)}^{\parbox{7em}{\centering \footnotesize Impact of virtual queue and auxiliaries}}
\Big]_{\#1} +
\Big[ 
\overbrace{V \nabla_{\vect{\sched}_b}\tran f\big(\vect{\sched}_b^{\texth{avg}}(t)\big) \vect{\sched}_b(t)}^{\text{penalty}}
\\- \underbrace{\vect{q}_b\tran(t)\vect{\hat{\rate}}_b(t)}_{\text{QSI and CSI}} -
\underbrace{\vect{\vqSched}_b\tran(t)\vect{\sched}_b(t) }_{\parbox{7em}{\centering\footnotesize Impact of virtual queue and scheduling}}
 \Big]_{\#2},
\end{multline}
at each time $t$.
The terms $K$ and $\vect{q}_b\tran(t)\vect{\arrival}_b(t)$
are neglected since they do not depend on $\vect{\sched}_b(t)$ and $\vect{\auxSched}_b(t)$.
Note that terms $\#1$ and $\#2$ have decoupled the scheduling variables and the auxiliary variables, respectively.
Thus, the subproblems of finding auxiliary variables and scheduling variables can be presented as below. 

\subsection{Evaluation of auxiliary variables}

Since the auxiliary variables are decoupled from scheduling variables as given in (\ref{eqn:join_objective}), the formal representation of auxiliary variables evaluation at SBS $b$ for time $t$ is as follows:
\begin{subequations}\label{eqn:sub_prob_auxiliary}
	\begin{eqnarray}
	&\underset{\vect{\nu}}{\text{minimize}} & \vect{\vqSched}_b\tran(t)\vect{\nu}, \\
	&\text{subject to} & \vect{\nu} \in\set{V}.
	\end{eqnarray}
\end{subequations}
It can be noted that the feasible set is a convex hull.
Due to the affine nature of the objective function, we can conclude that the optimal solution for (\ref{eqn:sub_prob_auxiliary}) should lie on a vertex of the convex hull~\cite{book:boyd}.
Thus, the optimal solution is given by,
\begin{equation}\label{eqn:sub_prob_auxiliary_sol}
	{\auxSched}^\star_{bm}(t) = 
	\begin{cases}
		1, & \text{if}~m = \argmin_{m\in\set{M}_b} \Big( \vqSched_{bm}(t)\Big), \\
		0, & \text{otherwise}.
	\end{cases}
\end{equation}

\subsection{Determining the scheduling variables}

\begin{algorithm}[!t]
	\caption{UE Scheduling Algorithm Per SBS}
	\label{alg:scheduling}
	\begin{algorithmic}[1]                    
		\STATE {\bf Input:} $\vect{q}_b(t)$ and $\vect{\vqSched}_b(t)$ for $t=0$ and SBS $b\in\set{B}$.
		\WHILE{ true }
		\STATE Observation: queues $\vect{q}_b(t)$ and $\vect{\vqSched}_b(t)$, and running averages $\vect{\sched}_b^{\texth{avg}}(t)$.
		\STATE Auxiliary variables: $\vect{\auxSched}_b(t) = \argmin_{\vect{\nu}\in\set{V}} \vect{\vqSched}_b\tran(t)\vect{\nu}$.
		\STATE Scheduling: $\vect{\sched}_b(t) = \argmax_{\vect{\delta}\in\set{L}(t,\vectx)}  
		\vect{q}_b\tran(t)\vect{\hat{\rate}}_b(t)
		+ \vect{\vqSched}_b\tran(t)\vect{\delta}
		- V \nabla_{\vect{\delta}}\tran f\big(\vect{\sched}_b^{\texth{avg}}(t)\big)\vect{\delta}$.
		\STATE Update: $\vect{q}_b(t+T)$, $\vect{\vqSched}_b(t+T)$ and $\vect{\sched}_b^{\texth{avg}}(t+T)$.
		\STATE $t\rightarrow t+T$
		\ENDWHILE
	\end{algorithmic}
\end{algorithm}

Optimal scheduling of SBS $b$ at time $t$ is found from solving a subproblem based on term $\#2$ in (\ref{eqn:join_objective}) as follows:

\begin{subequations}\label{eqn:sub_prob_scheduling}
	\begin{eqnarray}
	&\underset{\vect{\delta}}{\text{max}} & \vect{q}_b\tran(t)\vect{\hat{\rate}}_b(t)
	+ \vect{\vqSched}_b\tran(t)\vect{\delta}
	- V \nabla_{\vect{\delta}}\tran f\big(\vect{\sched}_b^{\texth{avg}}(t)\big)\vect{\delta}, \\
	&\text{s.t.} & \vect{\delta} \in\set{L}\big( t,\vectx(t)\big).
	\end{eqnarray}
\end{subequations}
Since the feasible set $\set{L}\big( t,\vectx(t)\big)$ is a convex hull and the objective is an affine function of $\vect{\delta}$, the optimal solution of (\ref{eqn:sub_prob_scheduling}) lies on a vertex of the convex hull $\set{L}\big( t,\vectx(t)\big)$.
Thus, the optimal scheduling at time $t$ is given by,
\begin{equation}\label{eqn:sub_prob_scheduling_sol}
{\sched}^\star_{bm}(t) = 
\begin{cases}
1, & \text{if}~m = m^\star, \\
0, & \text{otherwise},
\end{cases}
\end{equation}
where $m^\star=\argmax_{m\in\set{M}_b} \Big( q_{bm}(t)\tilde{r}_{bm}(t) + \vqSched_{bm}(t) - V \frac{\partial}{\partial{\sched_{bm}}}[f\big(\vect{\sched}_b^{\texth{avg}}(t)\big)] \Big)$.

\begin{remark}
	Even though we relax the boolean scheduling variables to continuous variables in the set $\set{L}\big( t,\vectx(t)\big)$, the optimal solution given in (\ref{eqn:sub_prob_scheduling_sol}) results a boolean scheduling vector.
	Thus, we claim that relaxing (\ref{eqn:scheduling_optimization}) does not change the optimality of the scheduling.
\end{remark}

The Lyapunov DPP method ensures that the gap between the time average penalty and the optimal solution is bounded by the term $\frac{K}{|V|}$~\cite{book:georgiadis06,jnl:neely10}, and \cite{jnl:bethanabhotla13}.
Thus, the optimality of the solution is ensured by the choice of a sufficiently large $|V|$.
The interrelation between the MFG and the Lyapunov optimization is illustrated in Fig.~\ref{fig:flow_chart} and UE scheduling algorithm which solves (\ref{eqn:scheduling_opt_equivalent}) is given in Algorithm \ref{alg:scheduling}.

\section{Numerical Results}\label{sec:results}

For simulations, the dimensions of the problem must be simplified in order to solve the coupled PDEs using a finite element method.
We used the MATLAB PDEPE solver for this purpose.
The utility of an SBS at time $t$ is its EE $\one\tran\vect{r}(t)/\big( p(t)+p_0 )$ and the ultimate goal is to maximize the average expected energy efficiency defined as $\expect\big[\int_{0}^{T_0}\one\tran\vect{r}(t)/\big( p(t)+p_0 ) dt \big]$.
Here, $T_0$ is the duration of the entire simulation and $p_0$ is the fixed circuit power consumption at an SBS~\cite{jnl:shuguang05}.
We assume that channels are not time-varying and thus, the state is solely defined by the QSI\footnote{
	Note that the assumption of fixed channels for simulations does not impact the theoretical contribution.
	Furthermore, even under non-fading channels, the provided results can model the ergodic behavior of the system.
}.
The initial 
limiting distribution, $\vect{\rho}\big(0,\vectx(0)\big)$, are assumed to follow a 
Gaussian distributions with means $0.5$ and variance $0.1$.
The choice of the final utility, the boundary condition,  $\Gamma\big(T,\vectx(T)\big)=-4\exp\big(\vectx(T)\big)$ is to encourage the scheduled UE to obtain an almost empty queue by the end of its scheduled period $T$.
The arrival rate $A(t)$ for a UE is modeled as a Poisson process with a mean of $\bar{A}=200~\text{kbps}$.
For simulations, the QSI is normalized by the queuing capacity of an SBS which is assumed to be $Q_{\max}=10\bar{A}$ kb.
Moreover, for simulations we assume that the transmit power is $\pow\in[0,1]~\text{Watts}$, the circuit power consumption is $\pow_0=1~\text{Watt}$, and the variance of Gaussian noise is $\noise=-70~\text{dBm}$.
The system parameters and channel models are based on \cite{online:3gpp10}.

The proposed method is compared to a baseline method which uses a proportional fair UE scheduler and an adaptive transmission policy.
Similar to the proposed method, SBSs in the baseline method seek to maximize EE by optimizing the transmit powers.
However, in the baseline, SBSs are not able to optimize their control parameters over the states due to the fact that they are oblivious to the states of the rest of the network.
Thus, the SBSs in the baseline method adopt a myopic approach in which they seek to maximize the \emph{instantaneous EE} $\one\tran\vect{r}(t)/\big( p(t)+p_0 )$ by optimizing $\pow(t)$ for each $t$ subject to UEs' instantaneous QoS requirement.
Moreover, each SBS needs to estimate the interference in order to solve its optimization problem.
Here, we assume that SBSs use the time average interference from their past experience and consider it as a valid estimation for the interference.
The SBS density of the system is defined by the average inter-site-distance (ISD) normalized by half of the minimum ISD, i.e. 1 unit of ISD implies an average distance of $20~\text{m}$.
The average load per SBS is
 $k=\frac{M}{B}$ while during the discussion on results, we use \emph{low load} for $k=2$ UEs per SBS and \emph{high load} for $k=5$ UEs per SBS.
Once a UE is scheduled, 100 transmissions will take place within the time period of $T=1$.

\subsection{Mean-field equilibrium of the proposed model}

\begin{figure}[!t]
	\centering
	\subfloat[MF distribution at the equlibrium as a function of time and QSI.]{
		\includegraphics[width=\myfigfactorx\columnwidth]{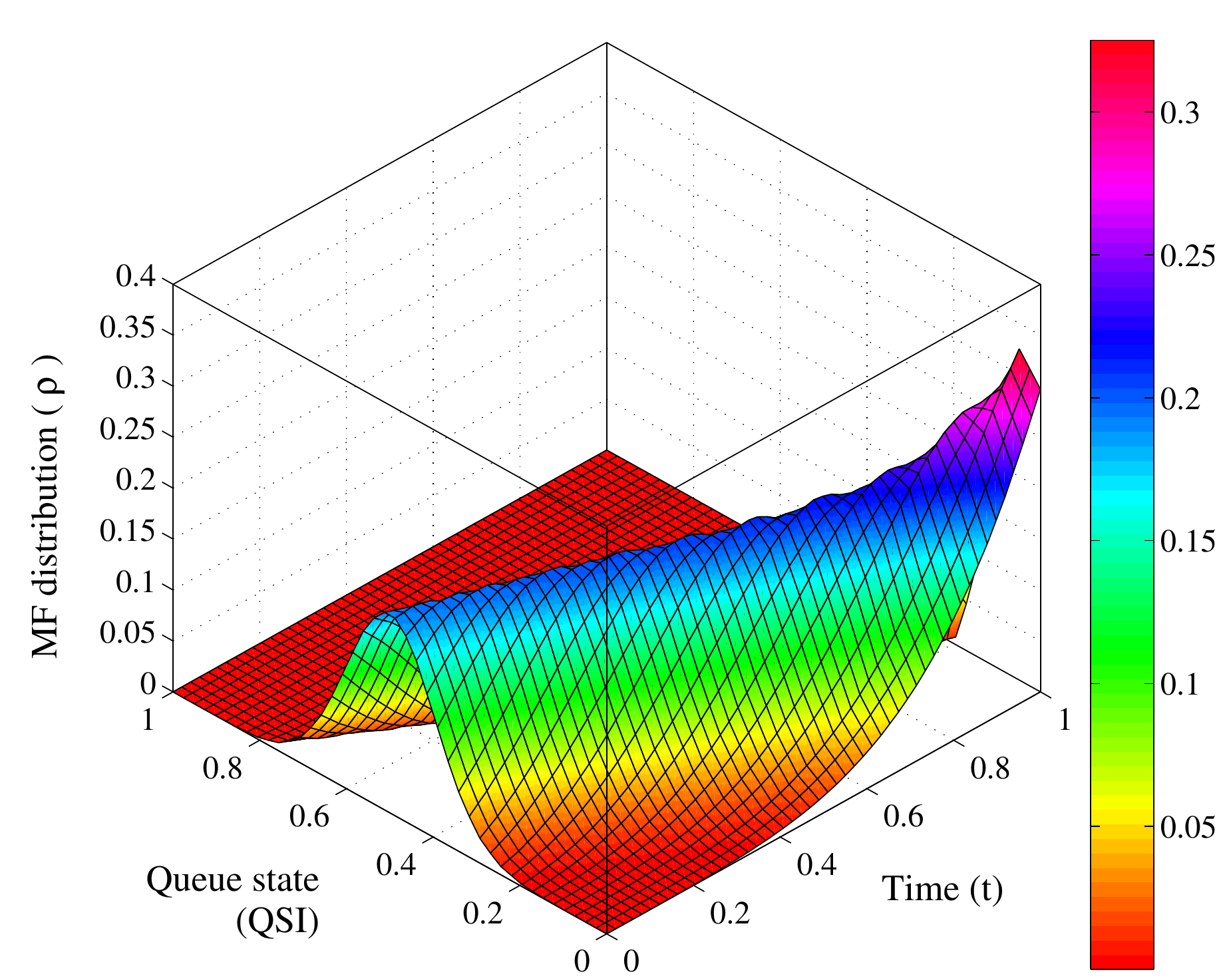}
		\label{fig:mf_distribution3D}
	}
	\hfil
	\subfloat[Evolution of the limiting distribution for a given QSI.]{
		\includegraphics[width=\myfigfactorx\columnwidth]{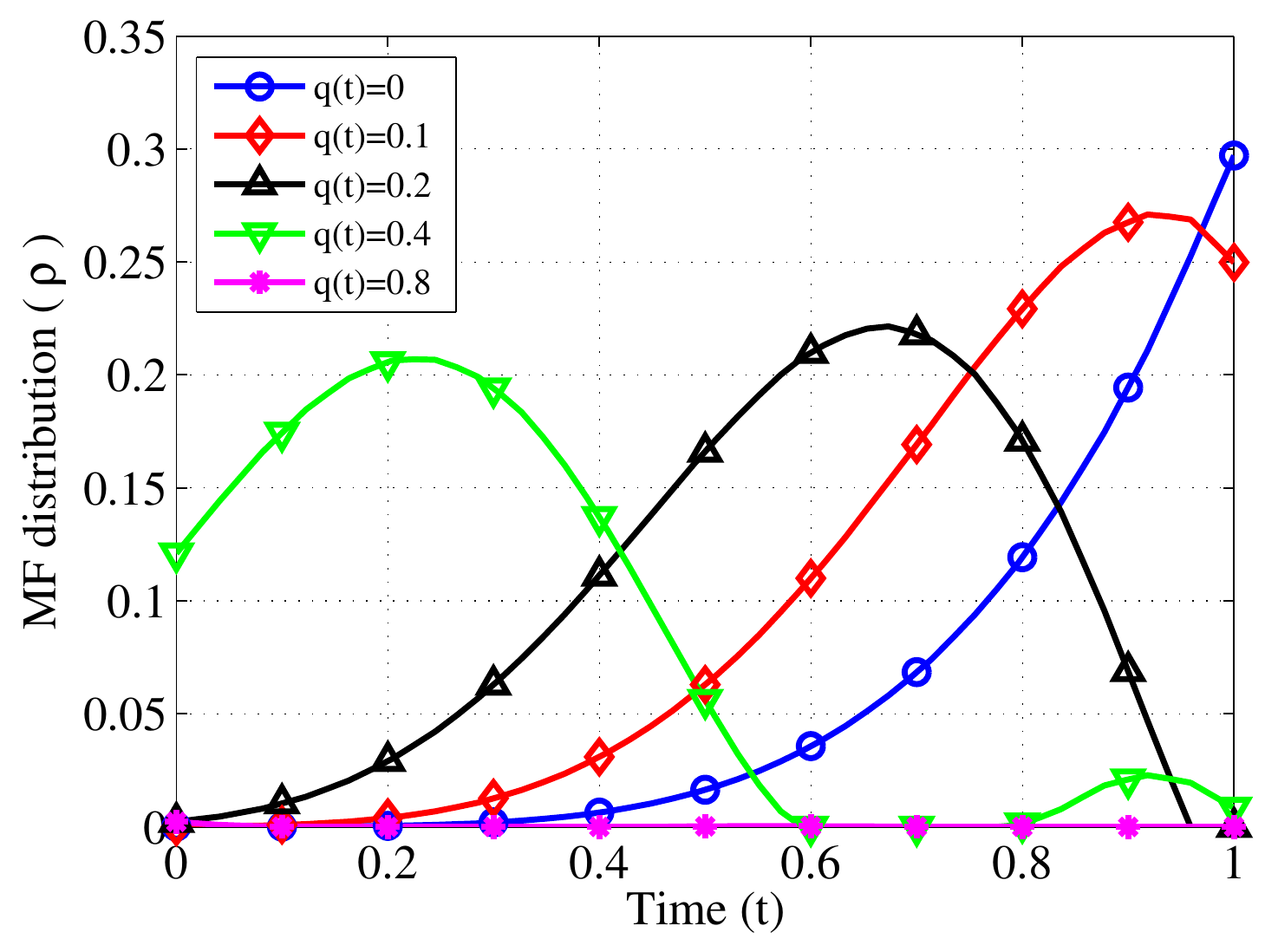}
		\label{fig:mf_distribution2D}
	}
	\caption{Evolution of limiting distribution ${\mass}^\star(t,q)$ at the MF equilibrium.}
	\label{fig:mean_field}
\end{figure}

Fig.~\ref{fig:mean_field} shows the MF distribution $\vect{\rho}^\star(t,q)$, which represents the evolution of the QSI distribution of scheduled UEs over time at the MF equilibrium.
During the period of $T=1$, SBSs transmit to their scheduled UEs and
expect to achieve a QSI close to zero by $t=T$ as shown in Fig.~\ref{fig:mf_distribution3D}.
Noted that, by the end of the transmission phase, the number of scheduled UEs with high QSI decreases thus, allowing the SBSs to schedule a new set of UEs by the next UE scheduling phase.
Here, the queues for all UEs will buildup whether they are scheduled or not due to the continuous arrivals with mean $\bar{A}$ at their respective serving SBSs.
As the queue grows beyond an SBS's capacity $Q_{\max}$ (i.e. $\text{QSI}>1$), the arrivals are dropped and thus, the associated UE will suffers from an outage.
Therefore, achieving a QSI almost close to zero for a scheduled UE provides the opportunity to build up its queue over a large period thus contributing to a reduction in the outage probability.
In Fig.~\ref{fig:mf_distribution2D}, we can see that the fraction of queues with $q(t)=\{0.2,0.4,0.8\}$ vanishes before the transmission duration ends.
As time evolves, the queues get empty based on the rates 
prior to new arrivals
and thus, an oscillation is observed for the queue fractions with $q(t)=0.4$ while $q(t)=0$ exhibits a monotonic increase.

\begin{figure}[!t]
	\centering
	\subfloat[Transmit power at the MF equilibrium as a function of time and QSI.]{
		\includegraphics[width=\myfigfactorx\columnwidth]{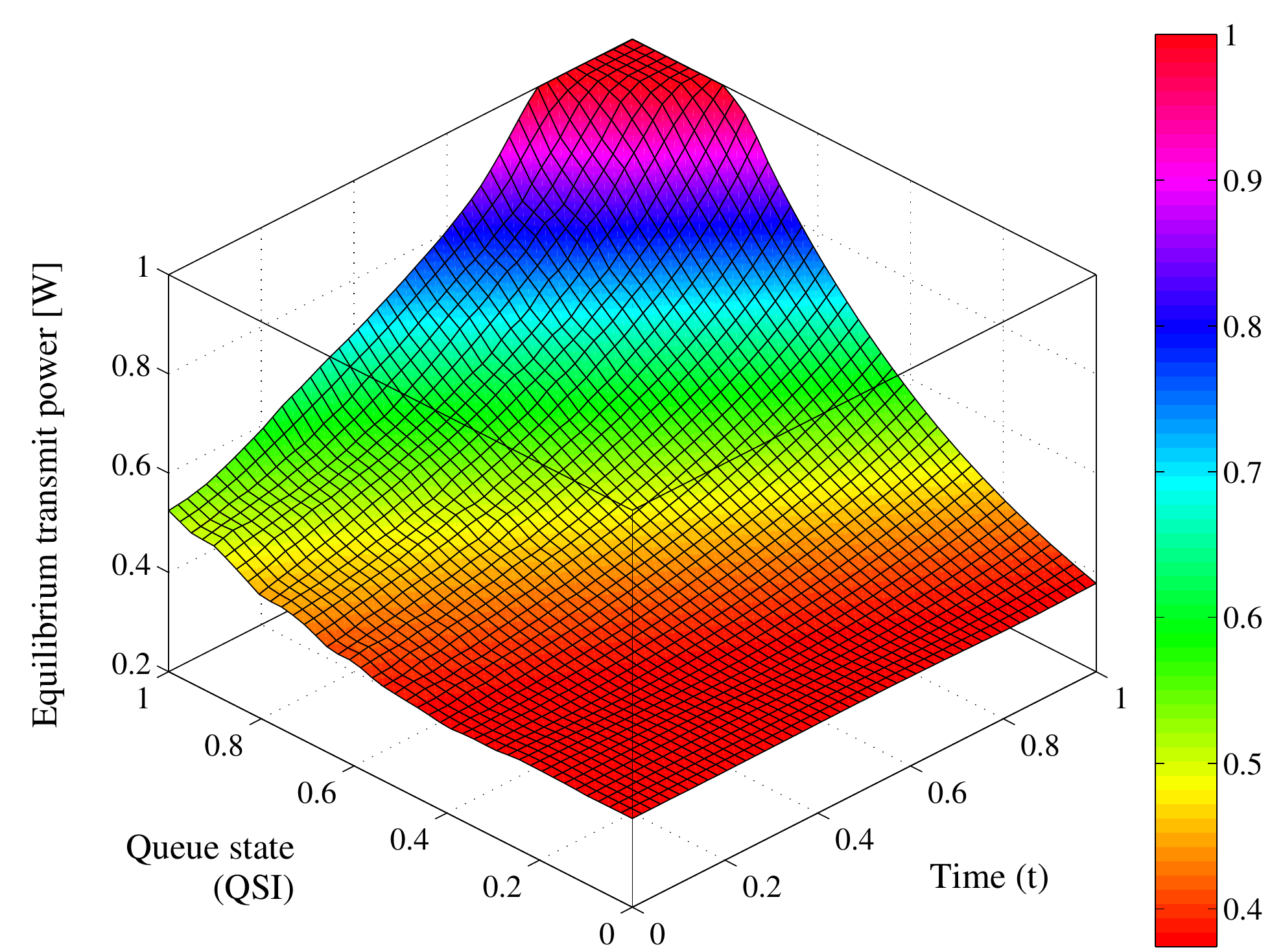}
		\label{fig:power_distribution3D}
	}
	\hfil
	\subfloat[Evolution of the transmit power for a given QSI.]{
		\includegraphics[width=\myfigfactorx\columnwidth]{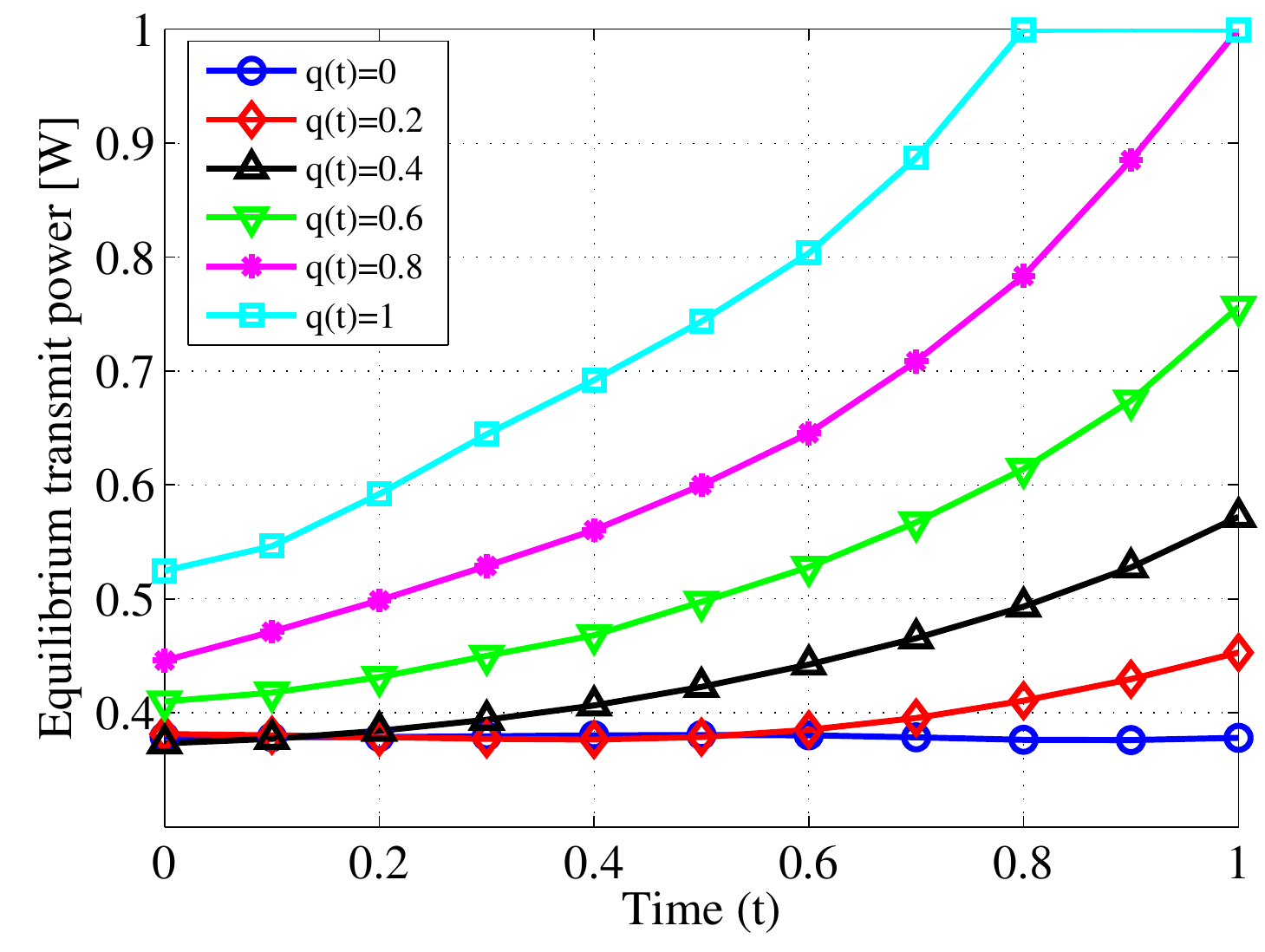}
		\label{fig:power_distribution2D}
	}
	\caption{Evolution of transmit power ${\pow}^\star(t,q)$ at the MF equilibrium.}
	\label{fig:power_distribution}
\end{figure}

The transmit power policy at the MF equilibrium is shown in Fig.~\ref{fig:power_distribution}.
It can be observed that a higher transmit power is needed when the QSI is high and this power can be reduced at low QSI, showing the overall EE of the proposed approach.
At $t=0$, a moderate transmit power is used even for UEs with high QSI.
Thus, SBSs prevent unnecessary interference within the system.
As the scheduling period arrives to an end, we recall that the choice of $\Gamma\big(T,\vectx(T)\big)=-4\exp\big(\vectx(T)\big)$ forces SBSs to obtain smaller QSI at $t=T$.
For UEs with low QSI, SBSs use moderate transmit power to maximize EE.
However, for UEs with high QSI, SBSs use higher transmit power to provide high data rates in order to empty their respective queues as well as preventing outages.
Thus, as time evolves, SBSs increase their transmit power for UEs with high QSI as illustrated in Fig.~\ref{fig:power_distribution2D} thereby improving the final utility.

\subsection{Energy efficiency and outage comparisons}\label{subsec:ee_op_comp}

\begin{figure}[!t]
	\centering
	\subfloat[Comparison of EE in terms of transmit bits per unit energy for low and high loads $k=\{2,5\}$.]{
		\includegraphics[width=\myfigfactorx\columnwidth]{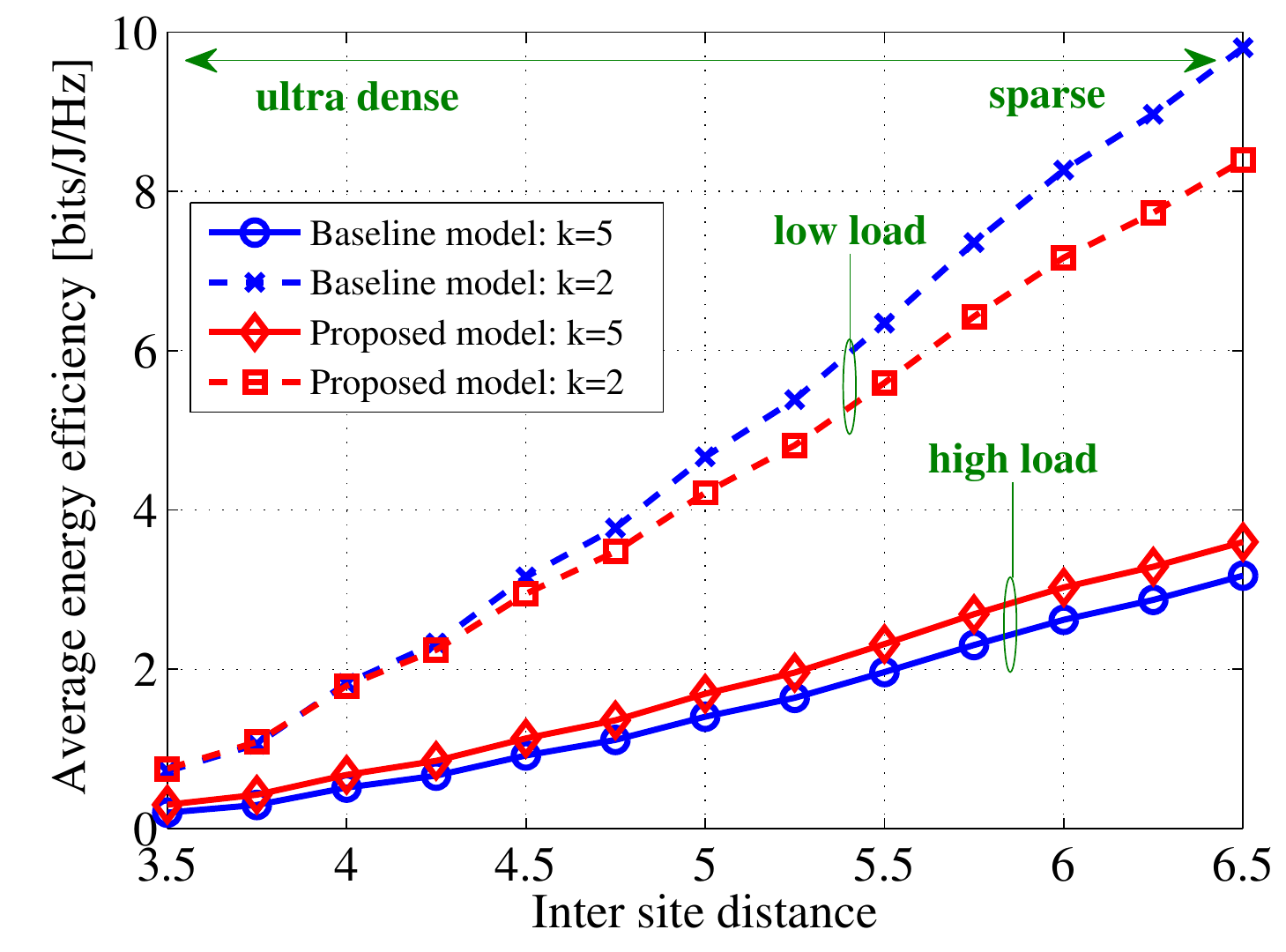}
		\label{fig:bpe_bsChange}
	}
	\hfil
	\subfloat[Comparison of outage probabilities for low and high loads $k=\{2,5\}$.]{
		\includegraphics[width=\myfigfactorx\columnwidth]{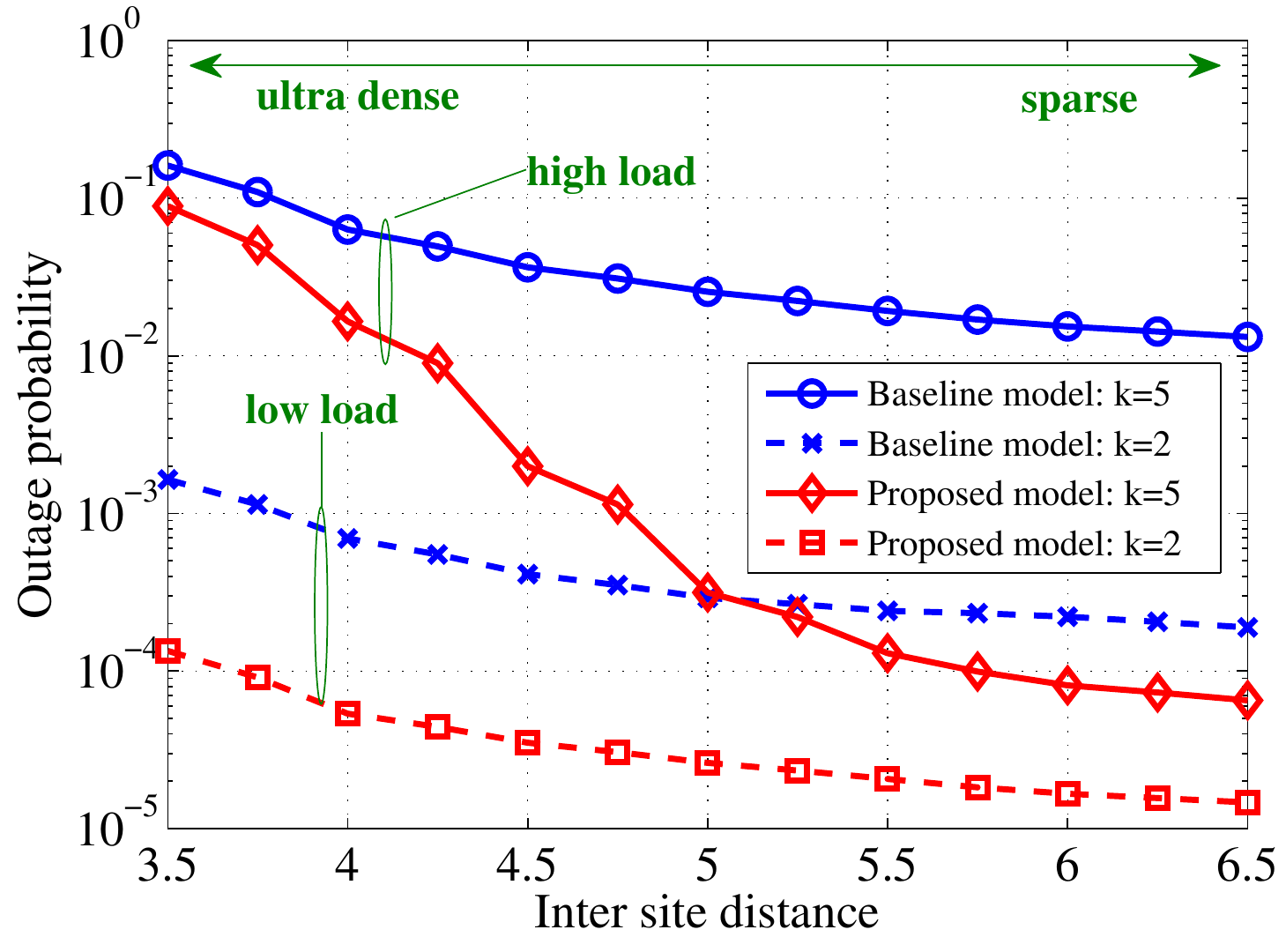}
		\label{fig:outage_bsChange}
	}
	\caption{Comparison of the behavior of EE and outage probabilities for different SBS densities.}
	\label{fig:bs_change}
\end{figure}

In Fig.~\ref{fig:bpe_bsChange}, we show the EE of the system in terms of the number of bits transmitted per joule of energy consumed as a function of ISD.
Here, the load represents the number of UEs served by an SBS.
For a dense network ($\isd = 3.5$), the proposed method will improve EE of about $5\%$ compared to the baseline model with $k=2$.
However, as the load increases to $k=5$, this EE improvement will reach up to $48.8\%$.
These improved EE for the dense scenario shown in the proposed approach is due to its ability to adapt to the dynamics of the network, as enabled by the stochastic game formulation and its MF approximation.
Naturally, as the network becomes less dense, the advantages of the proposed approach will be smaller.
For instance, for a non-dense network with an $\isd= 6.5$ and $k=5$ UEs per SBS, performance advantages of the MF approach decreases to $4.8\%$ as shown in Fig.~\ref{fig:bpe_bsChange}.
For a very lightly loaded and non-dense network, such as when $\isd=6.5$ and $k=2$ UEs per SBS, the use of the baseline method will be slightly more advantageous when compared to the proposed approach. 
This advantage is of about $14.1\%$ of improvement in the EE.

Fig.~\ref{fig:outage_bsChange} illustrates the outage probability as a function of ISD.
Here, the outage probability is defined as the fraction of unsatisfied UEs whose arrivals are dropped due to limitations of the queue capacity.
It can be noted that the proposed method uses smart UE scheduling and transmit power policy based on QSI which ensures a higher UE QoS.
Although the baseline model optimizes the transmit power with the goal of improving EE, it is unable to track the QSI dynamics and adapt the UE scheduling accordingly.
Thus, higher number of UEs suffer from outages.
As the network becomes highly loaded and more dense (i.e., as the number of SBSs and UEs increase), the number of UEs in outage will naturally increase for both approaches.
This is due to the increased interference and the high average waiting time due to increased number of UEs.  
From Fig.~\ref{fig:outage_bsChange}, we observe that the proposed method yields $91.8\%$ and $41.8\%$ reductions in outages compared to the baseline model for both low and high loads, respectively, in UDNs.
For a sparse network, the outage reductions are $92.2\%$ and $99.5\%$ for low and high loads, respectively.

\begin{figure}[!t]
	\centering
	\subfloat[Comparison of EE for sparse and dense networks $\text{ISD}=\{5.75,3.5\}$.]{
		\includegraphics[width=\myfigfactorx\columnwidth]{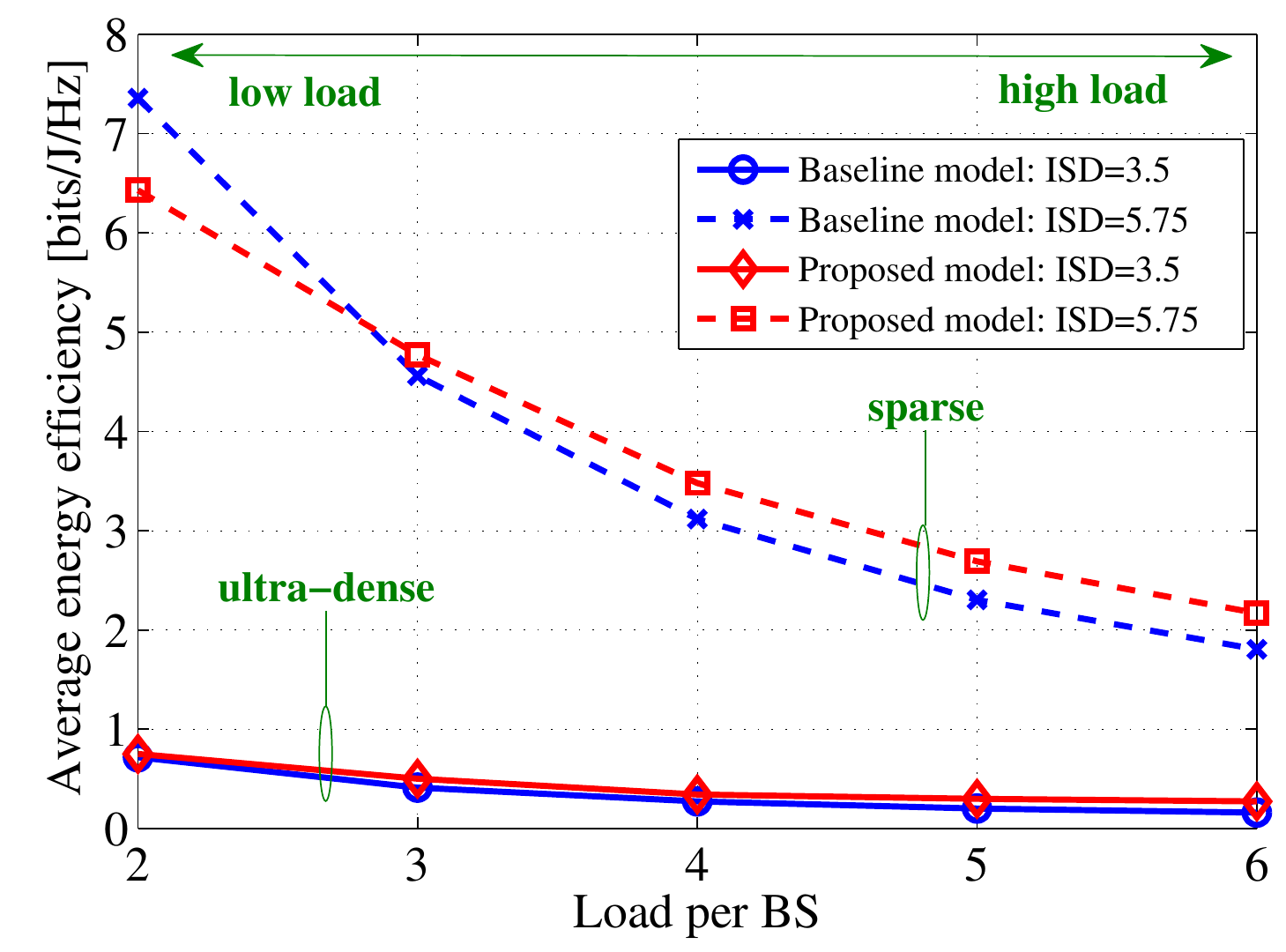}
		\label{fig:bpe_ueChange}
	}
	\hfil
	\subfloat[Comparison of outage probability for sparse and dense networks $\text{ISD}=\{5.75,3.5\}$.]{
		\includegraphics[width=\myfigfactorx\columnwidth]{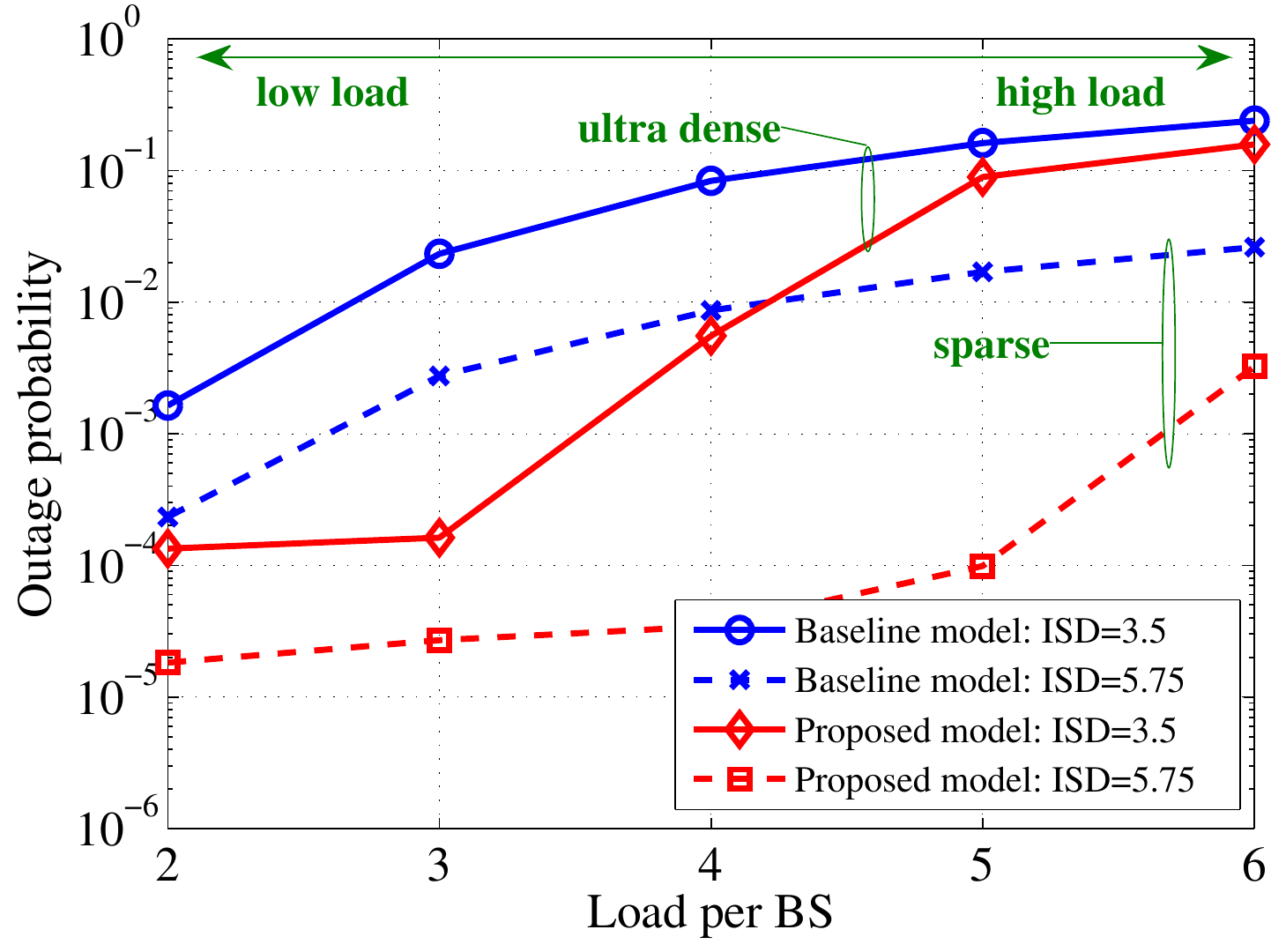}
		\label{fig:outage_ueChange}
	}
	\caption{Comparison of the behavior of EE and outage probabilities for different loads.}
	\label{fig:ue_change}
\end{figure}

Fig.~\ref{fig:bpe_ueChange} shows how EE varies with the number of UEs per SBS (i.e., load).
We can observe that the proposed method shows higher EE gains for dense networks.
For $k=6$ UEs per SBS and $\isd=3.5$, the proposed method yields an EE gain of $70.7\%$ compared to the baseline model while reaching $20.3\%$ as the network becomes sparse with $\isd=5.75$.
The gains in the proposed method for dense scenarios are due to the adaptive nature of transmit power and the QSI aware optimal UE scheduler.
As the load per SBS decreases, UEs are scheduled more often and it increases the average rate per UE.
Therefore, an improvement in EE is observed in both methods as illustrated in Fig.~\ref{fig:bpe_ueChange}.
Furthermore, as the rates increase, the myopic approach of the baseline method becomes efficient and thus, for the dense network $\isd=3.5$ with $k=2$, EE gain of the proposed method over the baseline method decreases to $4.8\%$.
Moreover, for the sparse scenario with $\isd=5.75$ and $k=2$, the EE of the baseline method exceeds the proposed method in which an EE loss of $12.6\%$ is observed in the proposed method.

In Fig.~\ref{fig:outage_ueChange}, we compare the outage probabilities for different loads.
In this figure, we can see that the outages increase for both proposed and baseline methods as the load increases.
Due to the increased number of UEs, each UE has to wait longer before it is scheduled.
Thus, there is a higher chance that the arrivals are dropped as the SBS queue capacity is exceeded thus yielding further outages.
These outages are low for sparse networks but they become significantly large for UDNs due to the increased interference and low rates.
Moreover, for a low load scenario $k=2$, Fig.~\ref{fig:outage_ueChange} shows that the proposed method reduces the outages by $91.8\%$ and $92.2\%$ compared to the baseline model for ultra dense and sparse networks, respectively.
As the load increases to $k=6$, although both models experience high outages, the proposed model results in  $33.7\%$ and $87.6\%$ outage reductions compared to the baseline model for ultra dense ($\isd=3.5$) and sparse ($\isd=5.75$) scenarios, respectively.

Based on the above discussion, it can be noted that higher energy efficiency gains of the proposed method over the baseline model are seen as the network become dense in both SBSs and UEs.
According to the network specifications given in \cite{online:3gpp13}, a sparse network consists of about $10$ SBSs/$\text{km}^2$ (ISD $=12$) each SBS serving $5\sim 10$ UEs ($K=\{5,10\}$) while a dense network has about $95$ SBSs/$\text{km}^2$ (ISD $=4$)  and $K=\{5,10\}$.
This paper analyzes networks consisting of $60$ SBSs/$\text{km}^2$ (ISD $=6.5$) with $K=1$ UEs per SBS (close to sparse as per \cite{online:3gpp13}), $60$ SBSs/$\text{km}^2$ (ISD $=6.5$) with $K\geq 3$ UEs per SSBS (neither sparse nor dense networks), and $250$ SBSs/$\text{km}^2$ (ISD $=3.5$) with $K=6$ UEs per SBS (ultra-dense compared to \cite{online:3gpp13}).
The results show that the proposed method is applicable for networks with average densities to ultra-dense networks.

\subsection{Transmit power and UE rate comparisons}

\begin{figure}[!t]
	\centering
	\subfloat[CDF of SBS transmit power for low and high loads $k=\{2,5\}$, respectively.]{
		\includegraphics[width=\myfigfactorx\columnwidth]{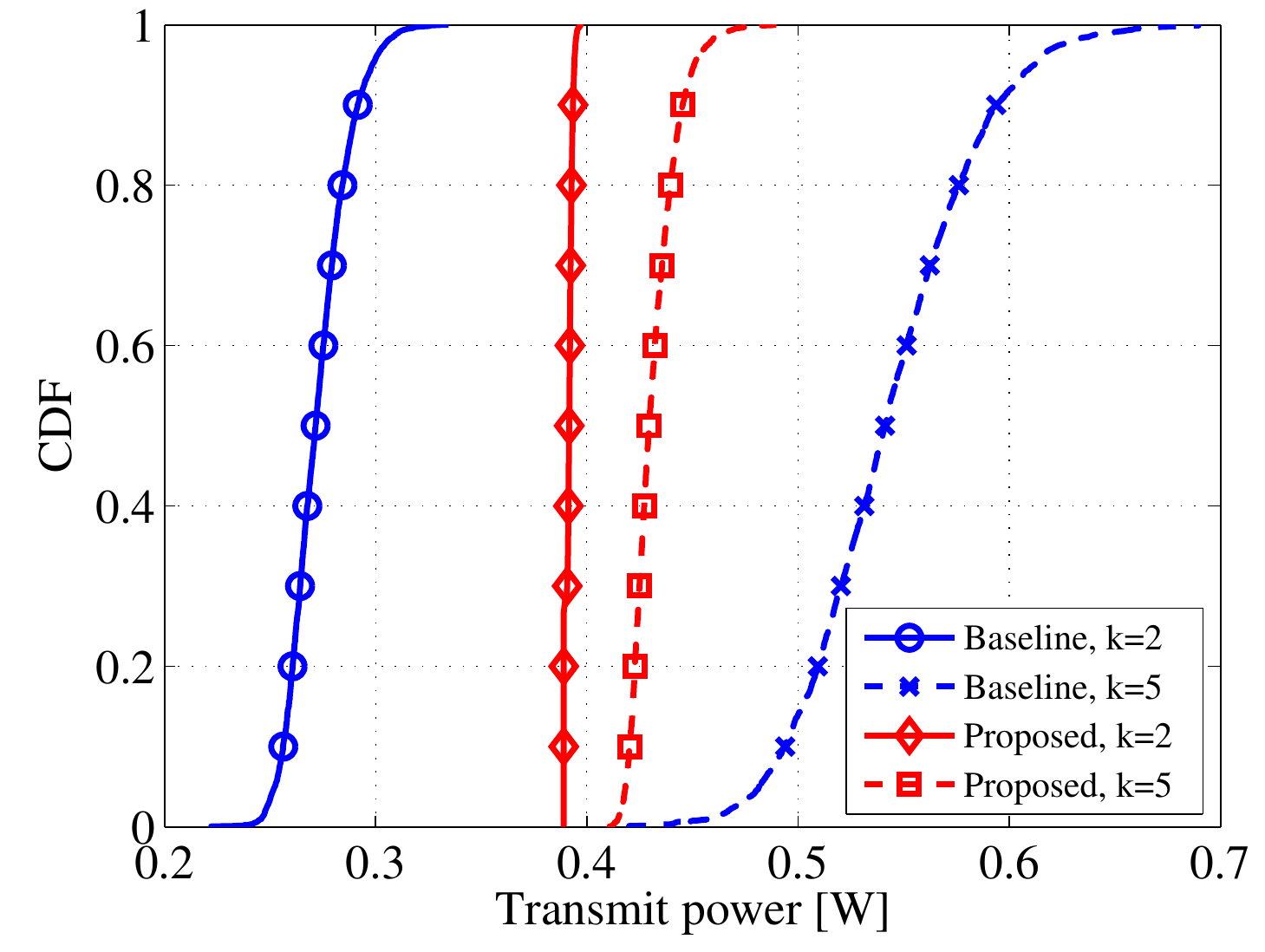}
		\label{fig:powerCDF_SPS}
	}
	\hfil
	\subfloat[CDF of UE rates for low and high loads $k=\{2,5\}$, respectively.]{
		\includegraphics[width=\myfigfactorx\columnwidth]{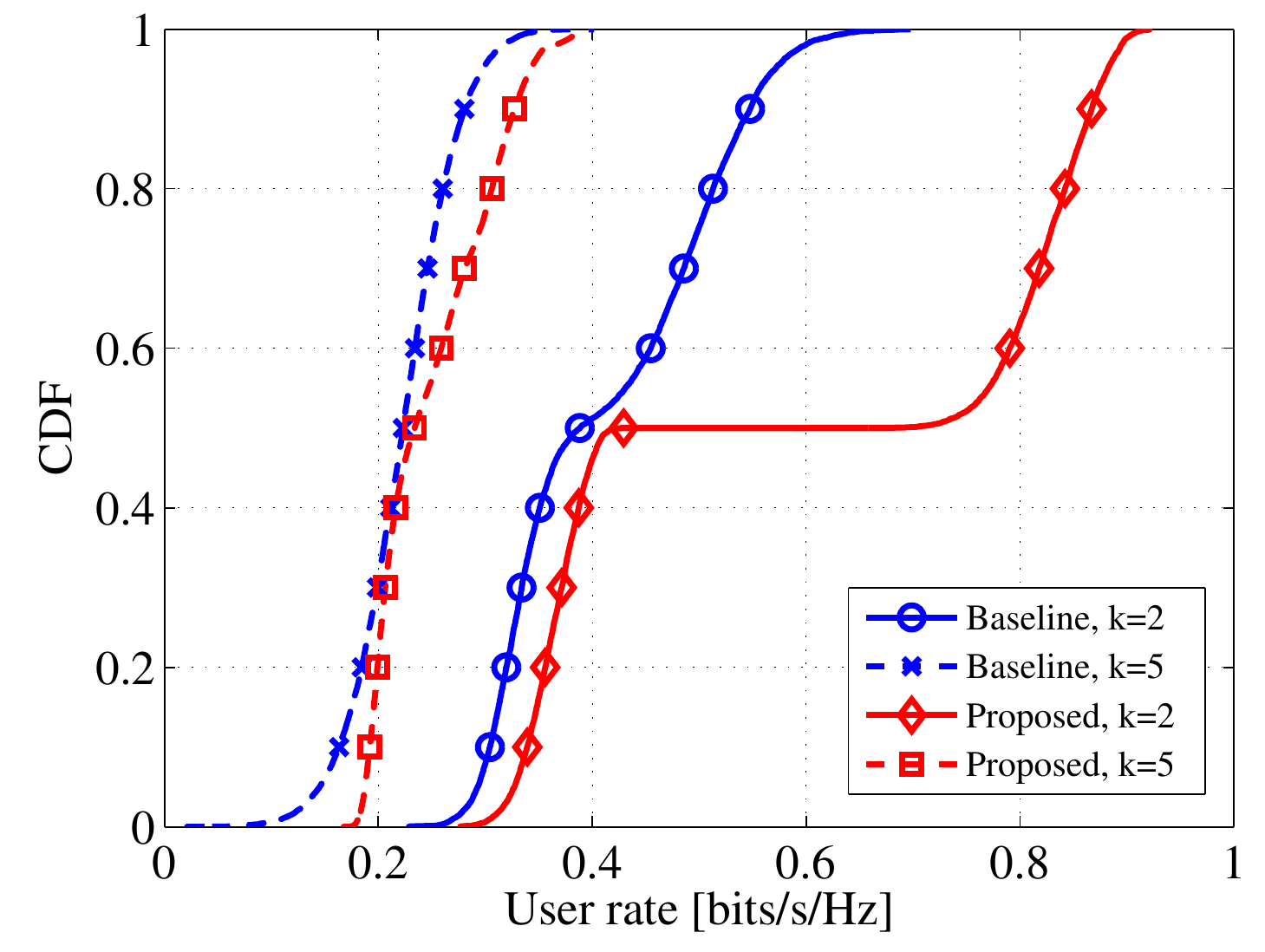}
		\label{fig:rateCDF_SPS}
	}
	\caption{Comparison of the SBS transmit powers and UE rates for a sparse scenario ($\isd=5.75$).}
	\label{fig:CDF_SPS}
\end{figure}

Fig. \ref{fig:CDF_SPS} presents the cumulative density functions (CDFs) of SBS transmit powers and UE rates for a sparse scenario with $\isd=5.75$.
Fig. \ref{fig:powerCDF_SPS} shows that the baseline method uses low transmit power compared to the proposed method for a low load with $k=2$.
The low load and the low interference in a sparse scenario allows all SBSs in the baseline method to make accurate estimations of the interferences.
Thus, SBSs in the baseline method can efficiently solve their instantaneous EE maximization problem thus resulting in higher EE compared to the proposed method.
This advantage of the baseline method reaches up to $43\%$.
As the load increases, the baseline method will begin to consume a higher amount of transmit power to satisfy the QoS of UEs while the proposed method manges to achieve its EE maximization goal with a small increase in the transmit power.
Thus, for highly loaded networks, the proposed method exhibits a $20.5\%$ reduction of the transmit power over the baseline model.
In Fig.~\ref{fig:rateCDF_SPS}, we can also see that the proposed method yields an improvement in the overall UE rates, when compared to the baseline. 
This performance advantage, in terms of rate, reaches up to $43.6\%$ and $12.3\%$ for $k=2$ and $k=5$ UEs per SBS, respectively.
Clearly, when jointly considering EE and rates, the proposed approach provides a significant overall improvement of UDN performance, when compared to the baseline.

\begin{figure}[!t]
	\centering
	\subfloat[CDF of SBS transmit power for low and high loads $k=\{2,5\}$, respectively.]{
		\includegraphics[width=\myfigfactorx\columnwidth]{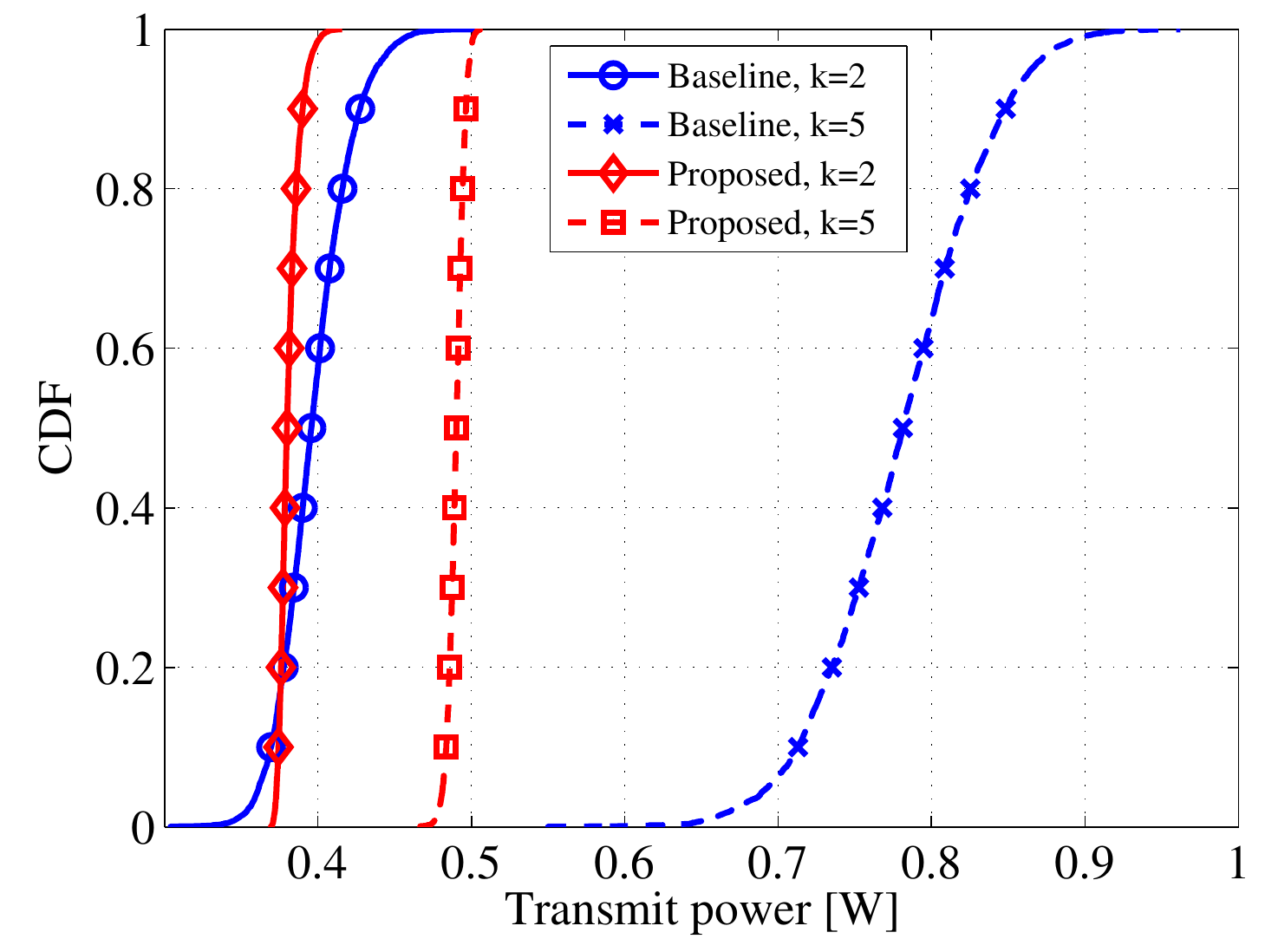}
		\label{fig:powerCDF_UDN}
	}
	\hfil
	\subfloat[CDF of UE rates for low and high loads $k=\{2,5\}$, respectively.]{
		\includegraphics[width=\myfigfactorx\columnwidth]{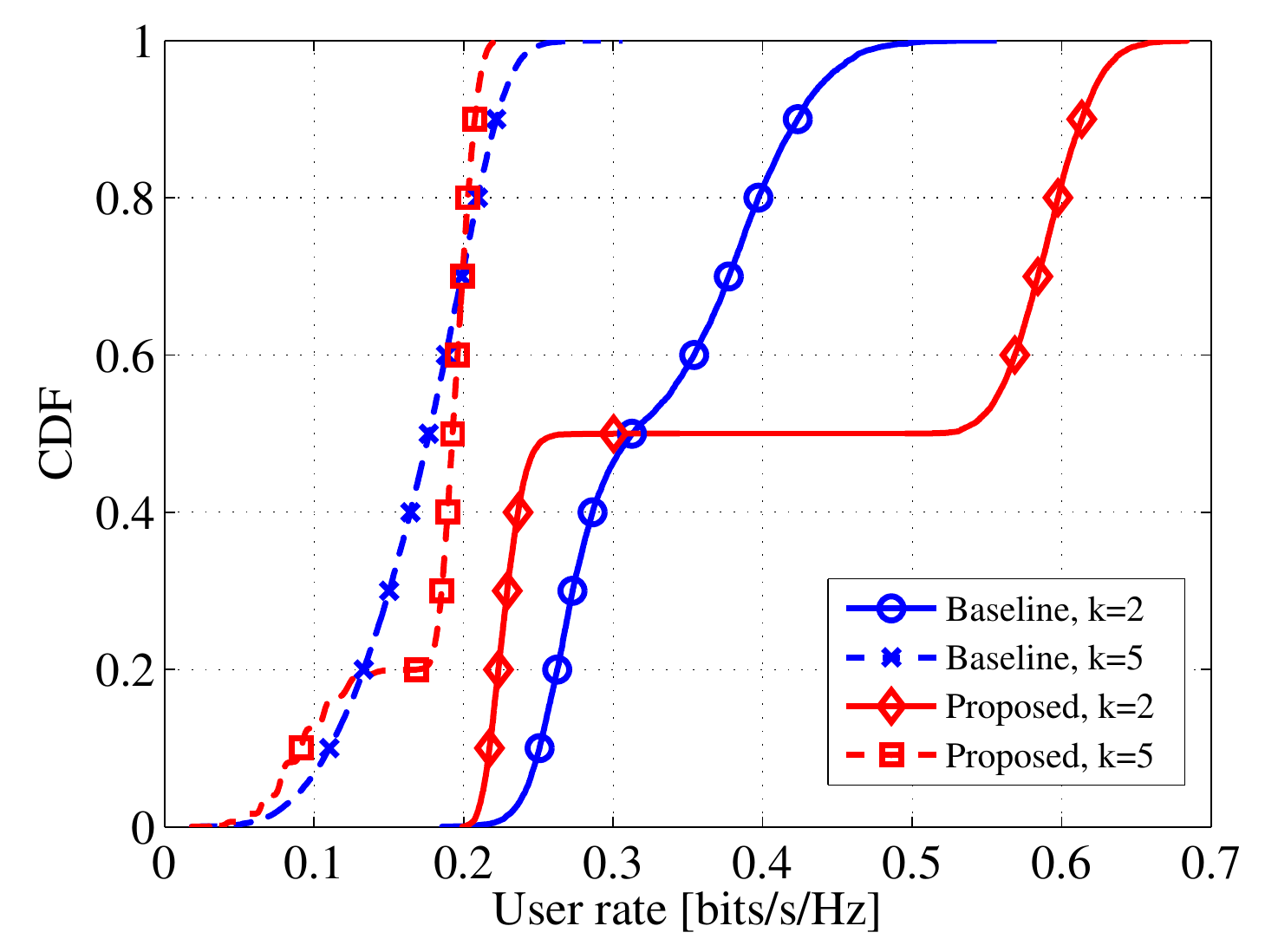}
		\label{fig:rateCDF_UDN}
	}
	\caption{Comparison of the SBS transmit powers and UE rates for a dense scenario ($\isd=3.5$).}
	\label{fig:CDF_UDN}
\end{figure}

Fig. \ref{fig:CDF_UDN} presents the CDFs of SBS transmit powers and UE rates for a dense scenario with $\isd=3.5$.
From Fig. \ref{fig:powerCDF_UDN}, it can be observed that the proposed method uses low transmit power and provides higher UE rates on the average compared to the baseline model.
For the low load $k=2$, the average transmit powers are 0.381 W and 0.397 W for the proposed and baseline methods, respectively.
To handle the excess load, the respective average transmit powers increase to 0.49 W and 0.78 W as the load is increased to $k=5$.
Therefore, we can see that the proposed method reduces the power consumption by about $4\%$ for low loads and $37.1\%$ for high loads compared to the baseline model, respectively.
The average UE rates of proposed and baseline methods are 0.41 bits/s/Hz and 0.33 bits/s/Hz for low loads while 0.18~bits/s/Hz and 0.17 bits/s/Hz for high loads, respectively as shown in Fig.~\ref{fig:rateCDF_UDN}.
Thus, the gains in proposed method over the baseline method are $24.4\%$ and $2.4\%$ for low and high loads, respectively.
These gains validates the suitability of the proposed method on the dense networks.

Based on the above comparisons, it can be observed that the proposed method which adapts to the dynamics in the network is able to reduce energy consumption while maintaining a desirable balance between EE and data rate. 
Moreover, the proposed approach is able to reduce the number of UEs that are in outage due to the limitation on the SBS queue capacity.
For UDNs, the performance improvements become much more significant, reaching up to a $70.7\%$ improvement in EE and up to $33.7\%$ reduction of outages.
These results demonstrate that the proposed solution can clearly turn the density of UDNs into gains in terms of energy efficiency.

\subsection{Impact of the boundary conditions}\label{subsec:impact_boundry}

To derive a solution for the MF framework one must deal with solving HBJ-FPK PDEs in which the final solution will depend on predefined boundary conditions~\cite{book:gueant11}.
In this work, the choice of boundary condition is $\Gamma\big(T,\vectxx\big)=-4\exp\big(\vectxx\big)$.
At the end of the scheduling period, $t=T$, depending on the QSI, a cost is introduced based on the above boundary condition.
The cost is minimum if the queue is empty $\big(\vectx(T)=0\big)$, and grows exponentially as QSI at $T$ increases.
Thus, the choice of this boundary condition forces the SBSs to obtain smaller QSI at $t=T$.
To illustrate the impact of the boundary condition, we compare a selected set of results for three different boundary conditions, namely, 
\emph{i)} \textbf{exponential bound}: the boundary condition used in the former discussion, i.e.  $\Gamma\big(T,\vectxx\big)=-4\exp\big(\vectxx\big)$ for $\vectxx\in[0,1]$,
\emph{ii)} \textbf{uniform bound}: a boundary condition where the utility is uniform disregarding the final QSI, i.e.  $\Gamma\big(T,\vectxx\big)=-4$ for all $\vectxx\in[0,1]$, and
\emph{iii)}	\textbf{linear bound}: a boundary condition where the utility forces QSI at $t=T$ to be zero in a linear manner, i.e. $\Gamma\big(T,\vectxx\big)=-4(e-1)\vectxx -4$ for $\vectxx\in[0,1]$.
These boundary conditions are illustrated in Fig.~\ref{fig:boundary_conditions}.
\begin{figure}[!t]
	\centering
	\includegraphics[width=\myfigfactorx\columnwidth]{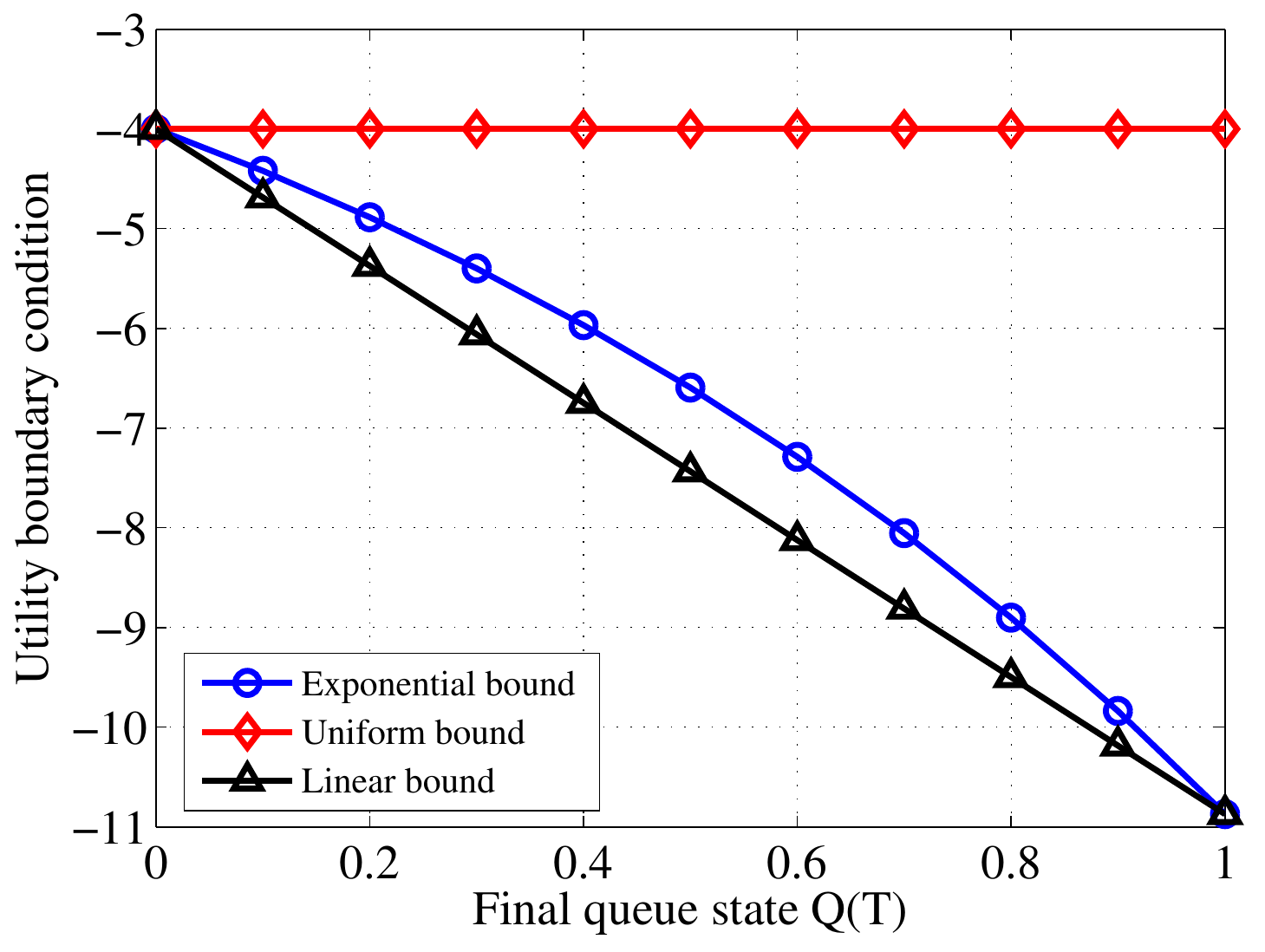}
	\caption{Different boundary conditions used for the utility at the end of scheduling period.}
	\label{fig:boundary_conditions}
\end{figure}
Here, the uniform bound is a relaxed boundary condition compared to the original exponential boundary condition as it grants fixed utility disregarding the QSI at $t=T$.
On the contrary, the linear bound forces QSI at $t=T$ to be even closer to zero compared to the exponential bound and thus, can be considered as a tighter boundary condition.

The MF distributions for the above boundary conditions are given in Fig.~\ref{fig:MF_bounds_man}.
\begin{figure}[!t]
	\centering
	\subfloat[With the exponential boundary.]{
		\includegraphics[width=\myfigfactorx\columnwidth]{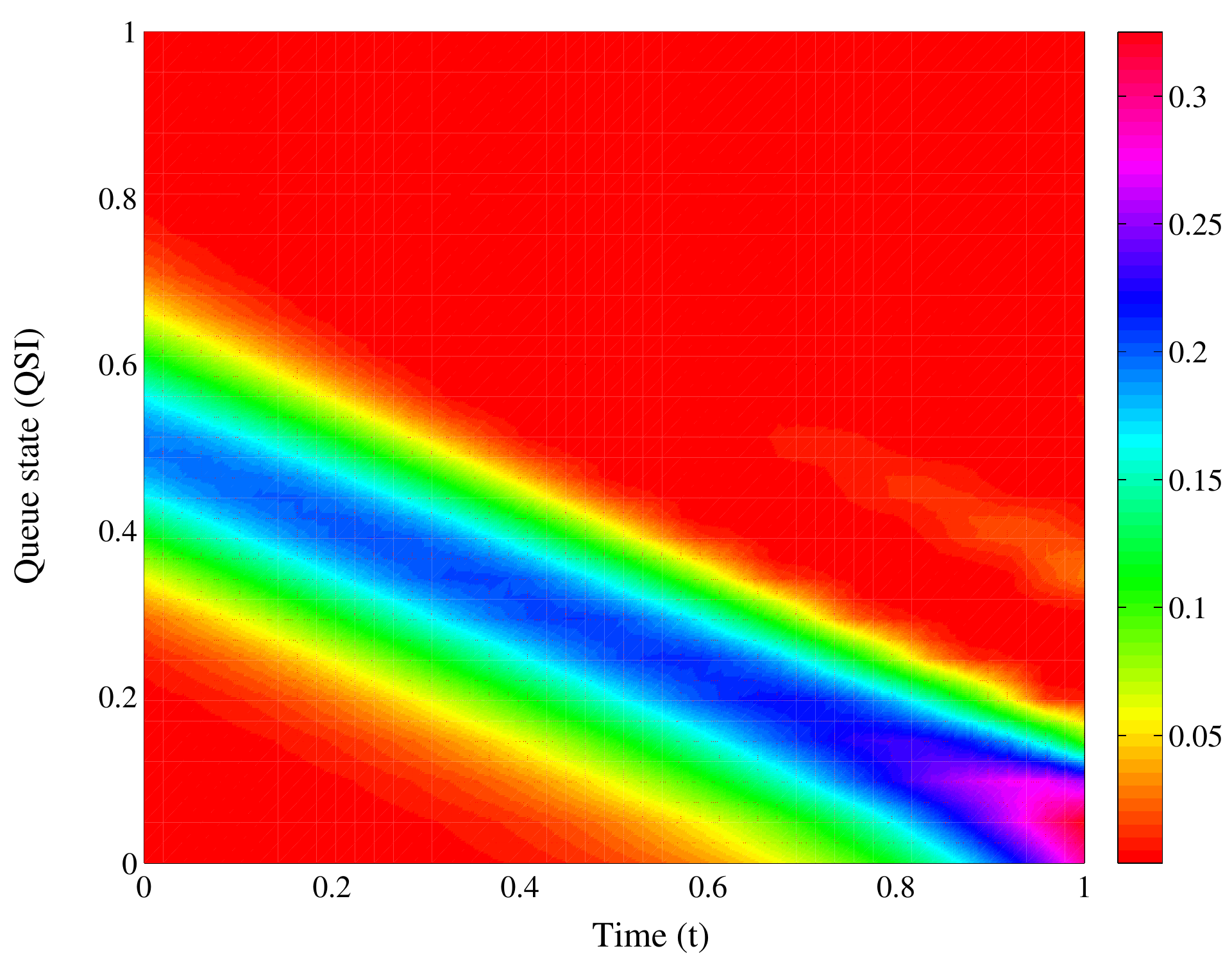}
		\label{fig:MF_exp_man}
	}
	\hfil
	\subfloat[With the uniform boundary.]{
		\includegraphics[width=\myfigfactorx\columnwidth]{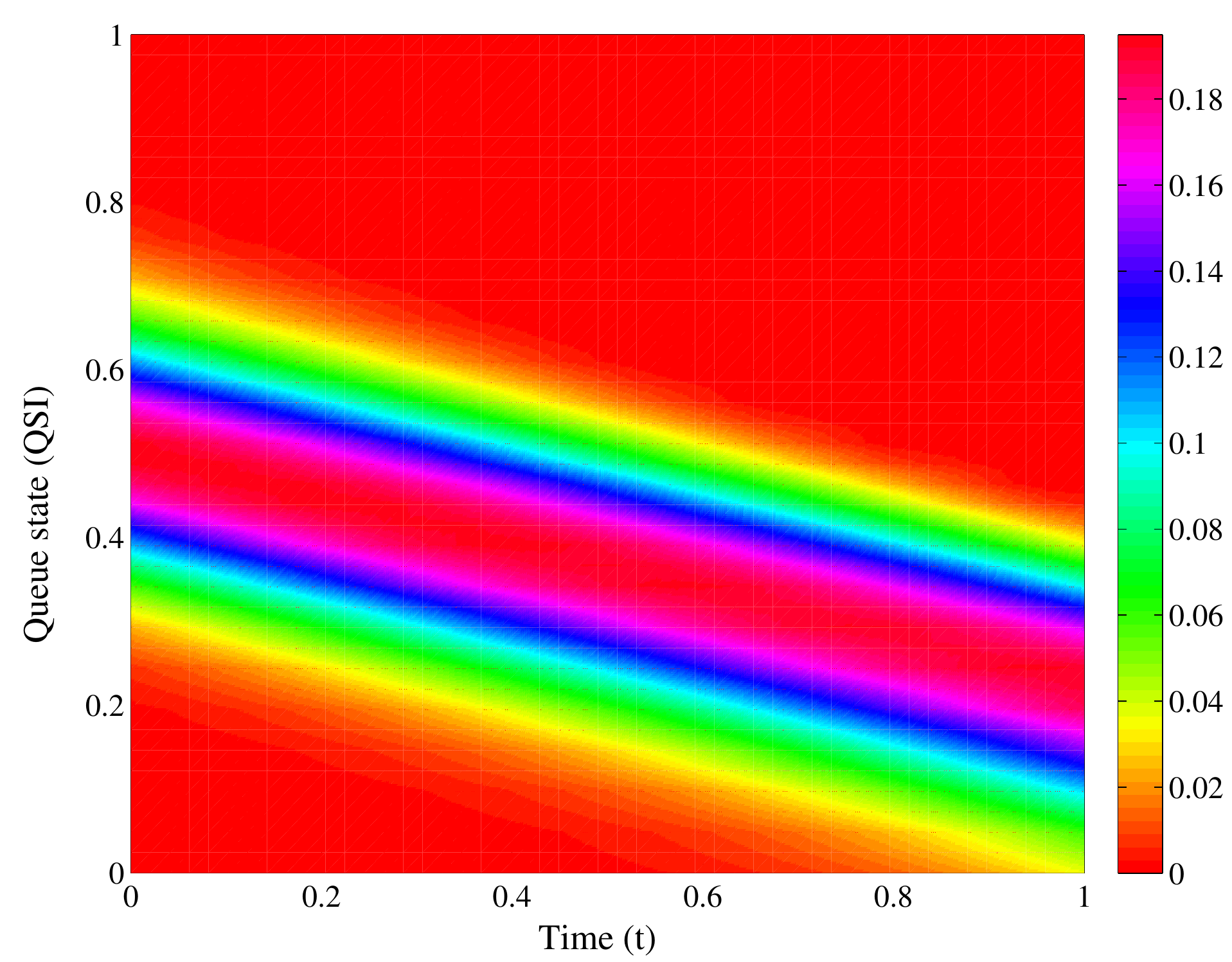}
		\label{fig:MF_uni_man}
	}
	\hfil
	\subfloat[With the linear boundary.]{
		\includegraphics[width=\myfigfactorx\columnwidth]{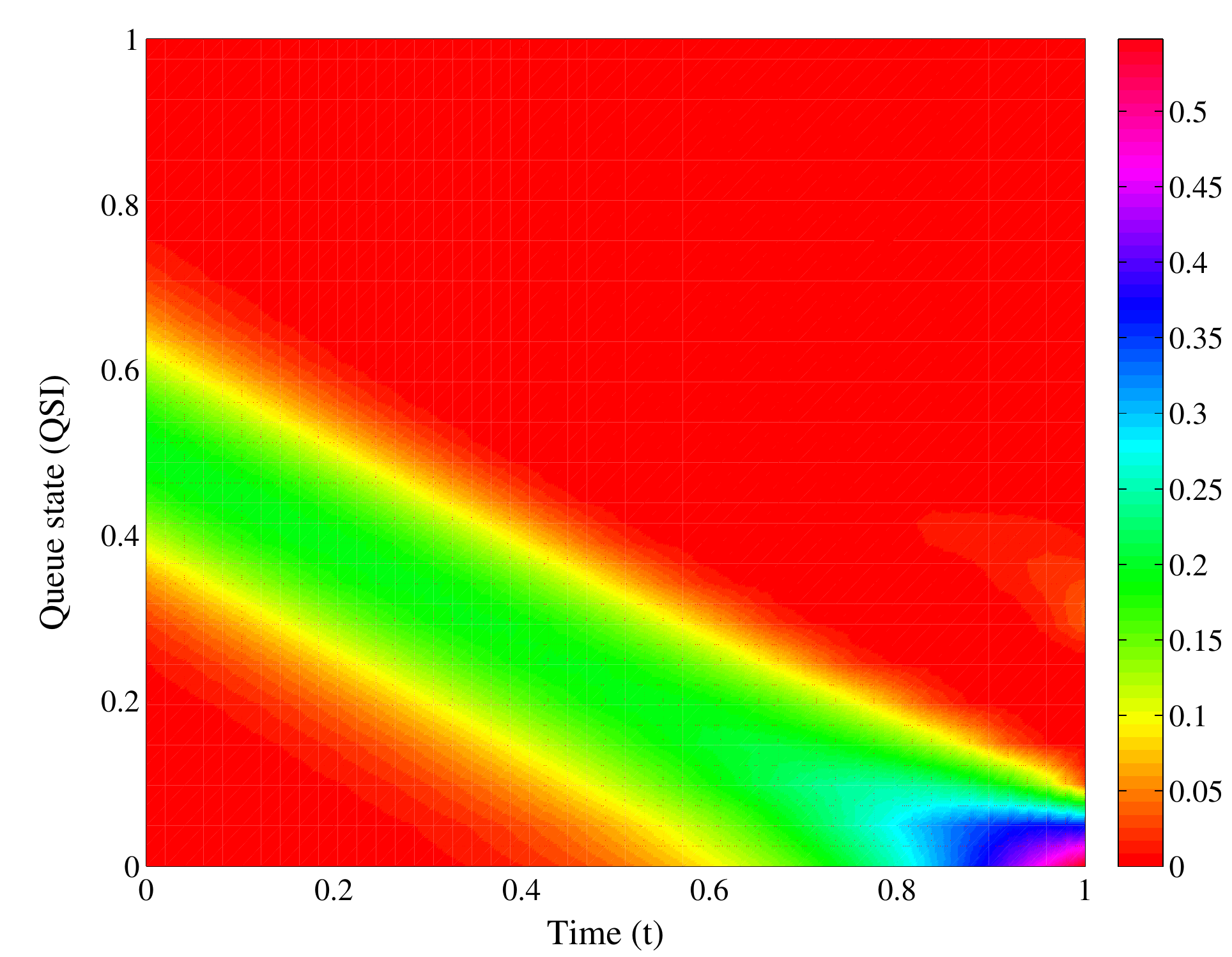}
		\label{fig:MF_lin_man}
	}
	\caption{MF distributions for different boundary conditions.}
	\label{fig:MF_bounds_man}
\end{figure}
It can be noted that with the uniform bound, that a larger fraction of users with non-zero queues can be seen compared to the exponential boundary. 
The main reason is that the fixed utility introduced by the boundary condition does not force QSI to be zero by the end of the scheduling period.
On the contrary, the linear bound encourages to obtain almost empty queues by $t=T$ and thus, a large fraction of UEs with zero queues can be observed.

Finally, the comparison of EE and outage probabilities for the above boundary conditions are illustrated in Fig. \ref{fig:MF_bounds_perf_man}.
\begin{figure}[!t]
	\centering
	\subfloat[Average EE for different ISDs.]{
		\includegraphics[width=\myfigfactorx\columnwidth]{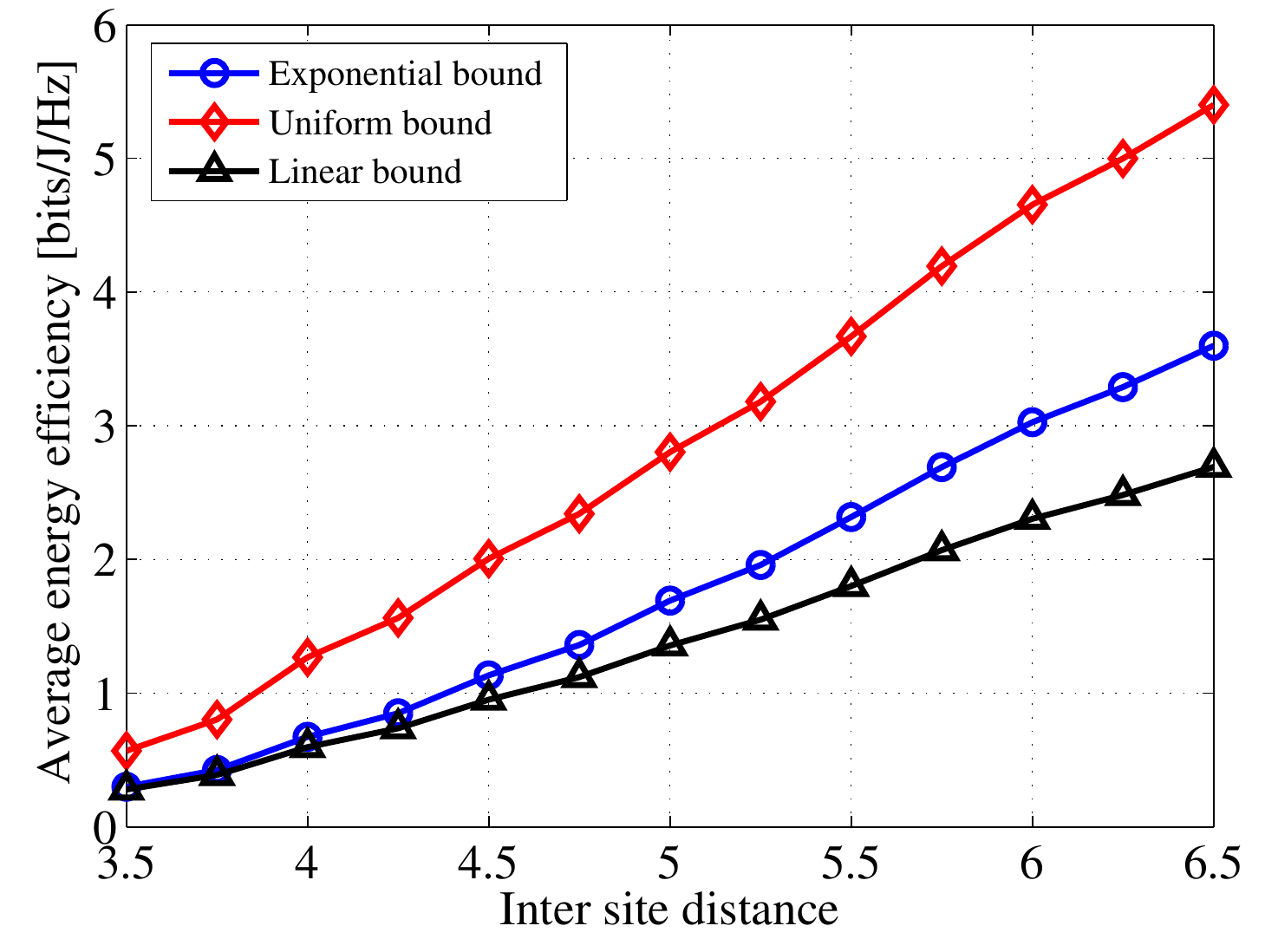}
		\label{fig:EE_bounds_man}
	}
	\hfil
	\subfloat[Outage probabilities for different ISDs]{
		\includegraphics[width=\myfigfactorx\columnwidth]{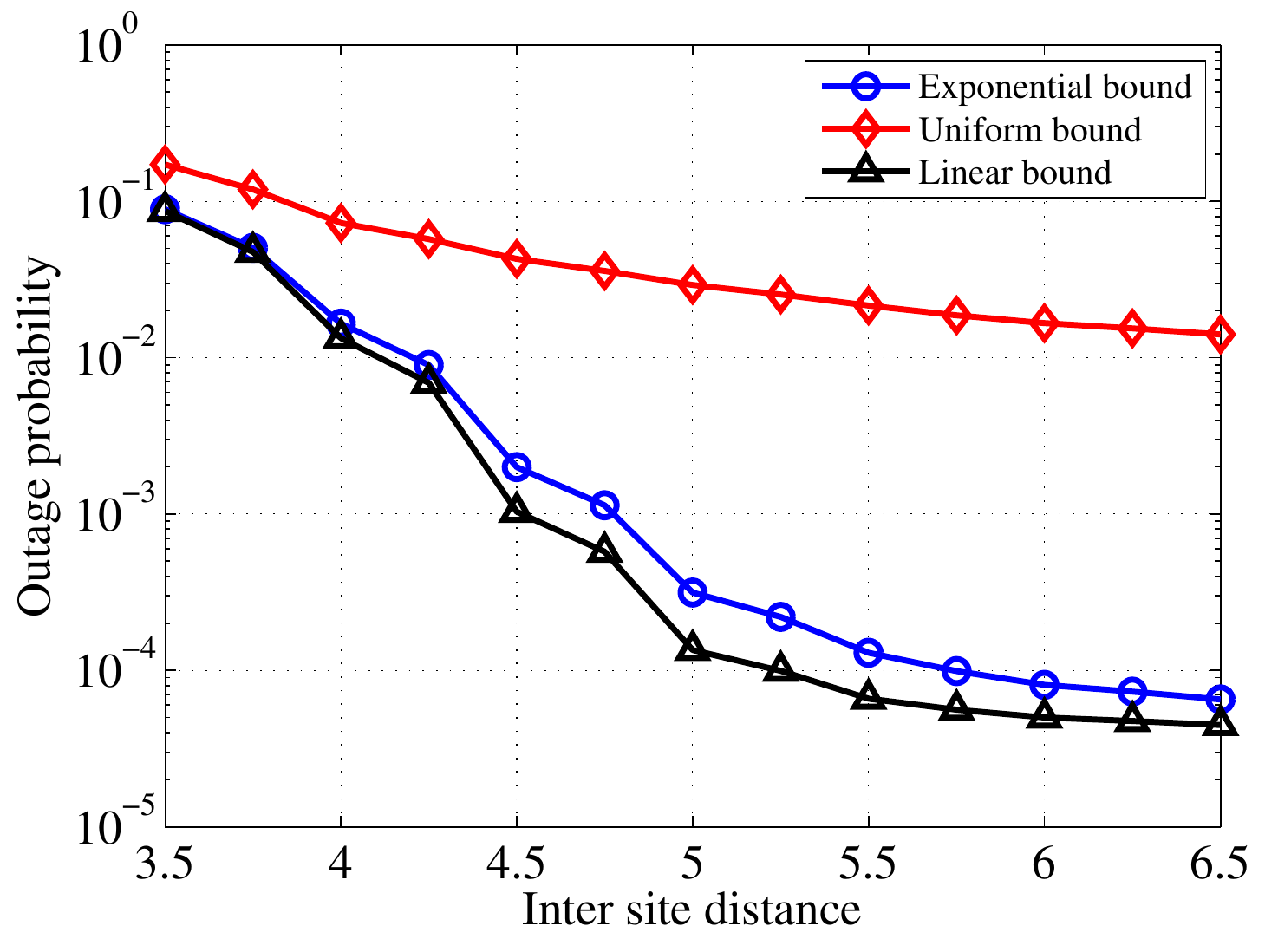}
		\label{fig:OP_bounds_man}
	}
	\caption{EE and outage probability comparison for different boundary conditions. The load is $k=5$. }
	\label{fig:MF_bounds_perf_man}
\end{figure}
Here, compared to the exponential bound, the uniform bound provides higher EE at the price of higher outage probabilities.
Relaxing the QSI at $t=T$ allows SBSs to use low transmit power and increase the EE.
However, the accumulated queues prevent further arrivals and thus, higher outage probabilities can be observed.
With the linear bound, all the data in the queues are transferred to UEs by the end of the scheduling period.
Thus, more arrivals can be queued resulting reduced outage probabilities over both other boundary conditions.
However, to ensure the empty queues by $t=T$, SBSs use higher energy to increase the data rates and thus, the EE is reduced compared to the exponential bound.

\section{Conclusions}\label{sec:conclusion}
In this paper, we have formulated the problem of joint power control and user scheduling for ultra-dense small cell deployment as a MFG under the uncertainties of QSI and CSI.
The goal is to maximize a time-average utility (energy efficiency in terms of bits per unit power) while ensuring users' QoS concerning outages due to queue capacity.
Under appropriate assumptions, we have analyzed the equilibrium of the MFG with the aid of low-complex tractable two partial differential equations (PDEs).
In conjunction the with mean-field approach, we have introduced a Lyapunov-based QSI and CSI aware user scheduler.
Using simulations, we have shown that the proposed method provides considerable gains in EE and massive reductions in outages compared to a baseline model and thus, becomes a suitable candidate for future UDNs.
One key future extension for this work is to account explicitly for SBS sleep mode optimization and adaptive cell association.

\bibliographystyle{IEEEtran}
\bibliography{my_journal_MFG}

\begin{thebibliography}{10}
\providecommand{\url}[1]{#1}
\csname url@samestyle\endcsname
\providecommand{\newblock}{\relax}
\providecommand{\bibinfo}[2]{#2}
\providecommand{\BIBentrySTDinterwordspacing}{\spaceskip=0pt\relax}
\providecommand{\BIBentryALTinterwordstretchfactor}{4}
\providecommand{\BIBentryALTinterwordspacing}{\spaceskip=\fontdimen2\font plus
\BIBentryALTinterwordstretchfactor\fontdimen3\font minus
  \fontdimen4\font\relax}
\providecommand{\BIBforeignlanguage}[2]{{%
\expandafter\ifx\csname l@#1\endcsname\relax
\typeout{** WARNING: IEEEtran.bst: No hyphenation pattern has been}%
\typeout{** loaded for the language `#1'. Using the pattern for}%
\typeout{** the default language instead.}%
\else
\language=\csname l@#1\endcsname
\fi
#2}}
\providecommand{\BIBdecl}{\relax}
\BIBdecl

\bibitem{jnl:andrews14}
J.~G. Andrews, S.~Buzzi, W.~Choi, S.~V. Hanly, A.~E. Lozano, A.~C.~K. Soong,
  and J.~C. Zhang, ``{What Will 5G Be?}'' \emph{{IEEE} J. Sel. Areas Commun.},
  vol.~32, no.~6, pp. 1065--1082, Jun. 2014.

\bibitem{online:alexiou13}
\BIBentryALTinterwordspacing
A.~Alexiou, ``{5G: on the count of three paradigm shifts},'' University of
  Piraeus, Tech. Rep., Feb. 2014. [Online]. Available:
  \url{https://www.itu.int/oth/R0A06000060/en}
\BIBentrySTDinterwordspacing

\bibitem{pap:yunas14}
S.~F. Yunas, J.~Niemel\"{a}, M.~Valkama, and T.~Isotalo, ``{Techno-economical
  analysis and comparison of legacy and ultra-dense small cell networks},'' in
  \emph{IEEE Conference on Local Computer Networks Workshops (LCN Workshops)},
  Edmonton, Canada, Sep. 2014, pp. 768--776.

\bibitem{pap:talwar14}
S.~Talwar, D.~Choudhury, K.~Dimou, E.~Aryafar, B.~Bangerter, and K.~Stewart,
  ``{Enabling technologies and architectures for 5G wireless},'' in \emph{IEEE
  MTT-S International Microwave Symposium (IMS)}, Tampa Bay, Florida, USA, Jun.
  2014, pp. 1--4.

\bibitem{jnl:wu15}
G.~Wu, C.~Yang, S.~Li, and G.~Y. Li, ``{Recent advances in energy-efficient
  networks and their application in 5G systems},'' \emph{{IEEE} Wireless
  Commun. Mag.}, vol.~22, no.~2, pp. 145--151, Apr. 2015.

\bibitem{jnl:lopez15}
D.~L\'{o}pez-P\'{e}rez, M.~Ding, H.~Claussen, and A.~H. Jafari, ``{Towards 1
  Gbps/UE in Cellular Systems:Understanding Ultra-Dense Small Cell
  Deployments},'' \emph{{IEEE} Commun. Surveys Tuts.}, vol.~PP, no.~99, pp.
  1--1, 2015.

\bibitem{jnl:lbjornson15}
\BIBentryALTinterwordspacing
E.~Bj{\"{o}}rnson, L.~Sanguinetti, and M.~Kountouris, ``{Deploying Dense
  Networks for Maximal Energy Efficiency: Small Cells Meet Massive MIMO},''
  \emph{CoRR}, vol. abs/1505.01181, Jun. 2015. [Online]. Available:
  \url{http://arxiv.org/abs/1505.01181}
\BIBentrySTDinterwordspacing

\bibitem{jnl:fahad15}
S.~F. Yunas, M.~Valkama, and J.~Niemel\"{a}, ``{Spectral and energy efficiency
  of ultra-dense networks under different deployment strategies},''
  \emph{{IEEE} Wireless Commun. Mag.}, vol.~53, no.~1, pp. 90--100, Jan. 2015.

\bibitem{pap:ternon14}
E.~Ternon, P.~Agyapong, L.~Hu, and A.~Dekorsy, ``{Energy savings in
  heterogeneous networks with clustered small cell deployments},'' in
  \emph{International Symposium on Wireless Communications Systems (ISWCS)},
  Barcelona, Spain, Aug. 2014, pp. 126--130.

\bibitem{pap:gotsis14}
A.~G. Gotsis, S.~Stefanatos, and A.~Alexiou, ``Spatial coordination strategies
  in future ultra-dense wireless networks,'' in \emph{International Symposium
  on Wireless Communications Systems (ISWCS)}, Barcelona, Spain, Aug. 2014, pp.
  801--807.

\bibitem{pap:li15}
H.~Li, D.~Huy, X.~Chen, and S.~Ciz, ``{High-resolution cell breathing for
  improving energy efficiency of Ultra-Dense HetNets},'' in \emph{IEEE Wireless
  Communications and Networking Conference (WCNC)}, New Orleans, LA, Mar. 2015,
  pp. 1458--1463.

\bibitem{book:gueant11}
O.~Gueant, J.-M. Lasry, and P.-L. Lions, ``{Mean Field Games and
  Applications},'' in \emph{Paris-Princeton Lectures on Mathematical Finance
  2010}, ser. Lecture Notes in Mathematics.\hskip 1em plus 0.5em minus
  0.4em\relax Springer Berlin Heidelberg, 2011, vol. 2003, pp. 205--266.

\bibitem{book:caines14}
P.~E. Caines, ``{Mean Field Games},'' in \emph{{Encyclopedia of Systems and
  Control}}.\hskip 1em plus 0.5em minus 0.4em\relax Springer London, 2014, pp.
  1--6.

\bibitem{pap:mari12}
M.~D. Mari, R.~Couillet, E.~C. Strinati, and M.~Debbah, ``{Concurrent data
  transmissions in green wireless networks: When best send one's packets?}'' in
  \emph{International Symposium on Wireless Communication Systems (ISWCS)},
  Paris, France, Aug 2012, pp. 596--600.

\bibitem{jnl:meriaux13}
\BIBentryALTinterwordspacing
F.~M\'{e}riaux, S.~Lasaulce, and H.~Tembine, ``{Stochastic Differential Games
  and Energy-Efficient Power Control},'' \emph{Dynamic Games and Applications},
  vol.~3, no.~1, pp. 3--23, 2013. [Online]. Available:
  \url{http://dx.doi.org/10.1007/s13235-012-0068-1}
\BIBentrySTDinterwordspacing

\bibitem{pap:gummadi12}
R.~Gummadi, R.~Johari, and J.~Y. Yu, ``Mean field equilibria of multi armed
  bandit games,'' in \emph{50th Annual Allerton Conference on Communication,
  Control, and Computing (Allerton)}, Monticello, IL, USA, Oct. 2012, pp.
  1110--1110.

\bibitem{book:neely10}
M.~J. Neely, \emph{{Stochastic Network Optimization with Application to
  Communication and Queueing System}}.\hskip 1em plus 0.5em minus 0.4em\relax
  Morgan and Claypool, 2010.

\bibitem{jnl:neely10}
\BIBentryALTinterwordspacing
------, ``Universal scheduling for networks with arbitrary traffic, channels,
  and mobility,'' \emph{CoRR}, vol. abs/1001.0960, 2010. [Online]. Available:
  \url{http://arxiv.org/abs/1001.0960}
\BIBentrySTDinterwordspacing

\bibitem{jnl:huang11}
\BIBentryALTinterwordspacing
L.~Huang and M.~J. Neely, ``Utility optimal scheduling in processing
  networks,'' \emph{Performance Evaluation}, vol.~68, no.~11, pp. 1002 -- 1021,
  2011. [Online]. Available:
  \url{http://www.sciencedirect.com/science/article/pii/S0166531611001003}
\BIBentrySTDinterwordspacing

\bibitem{book:georgiadis06}
L.~Georgiadis, M.~Neely, and L.~Tassiulas, \emph{{Resource Allocation and Cross
  Layer Control in Wireless Networks}}, ser. Foundations and Trends(r) in
  Networking.\hskip 1em plus 0.5em minus 0.4em\relax Now Publishers Inc, 2006.

\bibitem{pap:neely10}
M.~J. Neely, ``Stochastic network optimization with non-convex utilities and
  costs,'' in \emph{Information Theory and Applications Workshop (ITA)}, San
  Diego, CA, Jan. 2010, pp. 1--10.

\bibitem{jnl:bethanabhotla13}
D.~Bethanabhotla, G.~Caire, and M.~J. Neely, ``Adaptive video streaming for
  wireless networks with multiple users and helpers,'' \emph{{IEEE} Trans.
  Commun.}, vol.~63, pp. 268--285, Jan. 2015.

\bibitem{book:rappaport02}
T.~S. Rappaport, \emph{{Wireless Communications: Principles and Practice}},
  2nd~ed.\hskip 1em plus 0.5em minus 0.4em\relax Prentice Hall, 2002.

\bibitem{jnl:shuguang05}
C.~Shuguang, A.~J. Goldsmith, and A.~Bahai, ``Energy-constrained modulation
  optimization,'' \emph{{IEEE} Trans. Wireless Commun.}, vol.~4, no.~5, pp.
  2349--2360, Sep 2005.

\bibitem{book:evans10}
L.~C. Evans, \emph{{Partial Differential Equations}}, ser. Graduate Studies in
  Mathematics.\hskip 1em plus 0.5em minus 0.4em\relax American Mathematical
  Society, 2010.

\bibitem{jnl:oliveira12}
\BIBentryALTinterwordspacing
O.~R.~B. de~Oliveira, ``{The Implicit and the Inverse Function theorems: easy
  proofs},'' \emph{CoRR}, vol. abs/1212.2066, 2012. [Online]. Available:
  \url{http://arxiv.org/abs/1212.2066}
\BIBentrySTDinterwordspacing

\bibitem{book:couillet11}
R.~Couillet and M.~Debbah, \emph{{Random Matrix Methods for Wireless
  Communications}}.\hskip 1em plus 0.5em minus 0.4em\relax Cambridge University
  Press, 2011.

\bibitem{book:hekmat06}
R.~Hekmat, \emph{{Ad-hoc Networks: Fundamental Properties and Network
  Topologies}}.\hskip 1em plus 0.5em minus 0.4em\relax Springer Science \&
  Business Media, Sep. 2006.

\bibitem{jnl:hekmat08}
R.~Hekmat and X.~An, ``Relation between interference and neighbor attachment
  policies in ad-hoc and sensor networks,'' \emph{International Journal of
  Hybrid Information Technology}, vol.~1, no.~2, Apr. 2008.

\bibitem{jnl:haenggi09}
M.~Haenggi and R.~K. Ganti, ``{Interference in Large Wireless Networks},''
  \emph{Foundations and Trends\textsuperscript{\textregistered} in Networking},
  vol.~3, no.~2, pp. 127--248, Nov. 2009.

\bibitem{book:koch82}
G.~Koch and F.~Spizzichino, \emph{{Exchangeability in Probability and
  Statistics: International Conference Proceedings}}.\hskip 1em plus 0.5em
  minus 0.4em\relax Elsevier Science Ltd, 1982.

\bibitem{book:gagliardini14}
P.~Gagliardini and C.~Gouri\'eroux, \emph{{Granularity Theory with Applications
  to Finance and Insurance}}.\hskip 1em plus 0.5em minus 0.4em\relax Elsevier
  Science Ltd, 2014.

\bibitem{pap:tembine11}
H.~Tembine and M.~Huang, ``Mean field difference games: {McKean-Vlasov}
  dynamics,'' in \emph{IEEE Conference on Decision and Control and European
  Control Conference (CDC-ECC)}, Orlando, Florida, USA, Dec. 2011, pp.
  1006--1011.

\bibitem{jnl:kwan09}
R.~Kwan, C.~Leung, and J.~Zhang, ``{Proportional Fair Multiuser Scheduling in
  LTE},'' \emph{{IEEE} Signal Process. Lett.}, vol.~16, no.~6, pp. 461--464,
  Jun. 2009.

\bibitem{jnl:fritzsche15}
R.~Fritzsche, P.~Rost, and G.~P. Fettweis, ``{Robust Rate Adaptation and
  Proportional Fair Scheduling with Imperfect CSI},'' \emph{{IEEE} Trans.
  Wireless Commun.}, vol.~PP, no.~99, pp. 1--1, 2015.

\bibitem{book:boyd}
S.~Boyd and L.~Vandenberghe, \emph{{Convex Optimization}}.\hskip 1em plus 0.5em
  minus 0.4em\relax Cambridge University Press, 2004.

\bibitem{online:3gpp10}
\BIBentryALTinterwordspacing
3GPP, ``{Evolved universal terrestrial radio access (E-UTRA); Further
  advancements for E-UTRA physical layer aspects},'' {3rd Generation
  Partnership Project (3GPP)}, TR {36.814-900}, Mar. 2010. [Online]. Available:
  \url{http://www.3gpp.org/ftp/Specs/archive/36\_series/36.814/36814-900.zip}
\BIBentrySTDinterwordspacing

\bibitem{online:3gpp13}
\BIBentryALTinterwordspacing
------, ``{Small cell enhancements for E-UTRA and E-UTRAN - Physical layer
  aspects},'' {3rd Generation Partnership Project (3GPP)}, TR {36.872}, Dec.
  2013. [Online]. Available:
  \url{http://www.3gpp.org/ftp/Specs/archive/36\_series/36.872/36872-c10.zip}
\BIBentrySTDinterwordspacing

\end{thebibliography}

\begin{IEEEbiography}[{\includegraphics[width=1in,height=1.25in,clip,keepaspectratio]{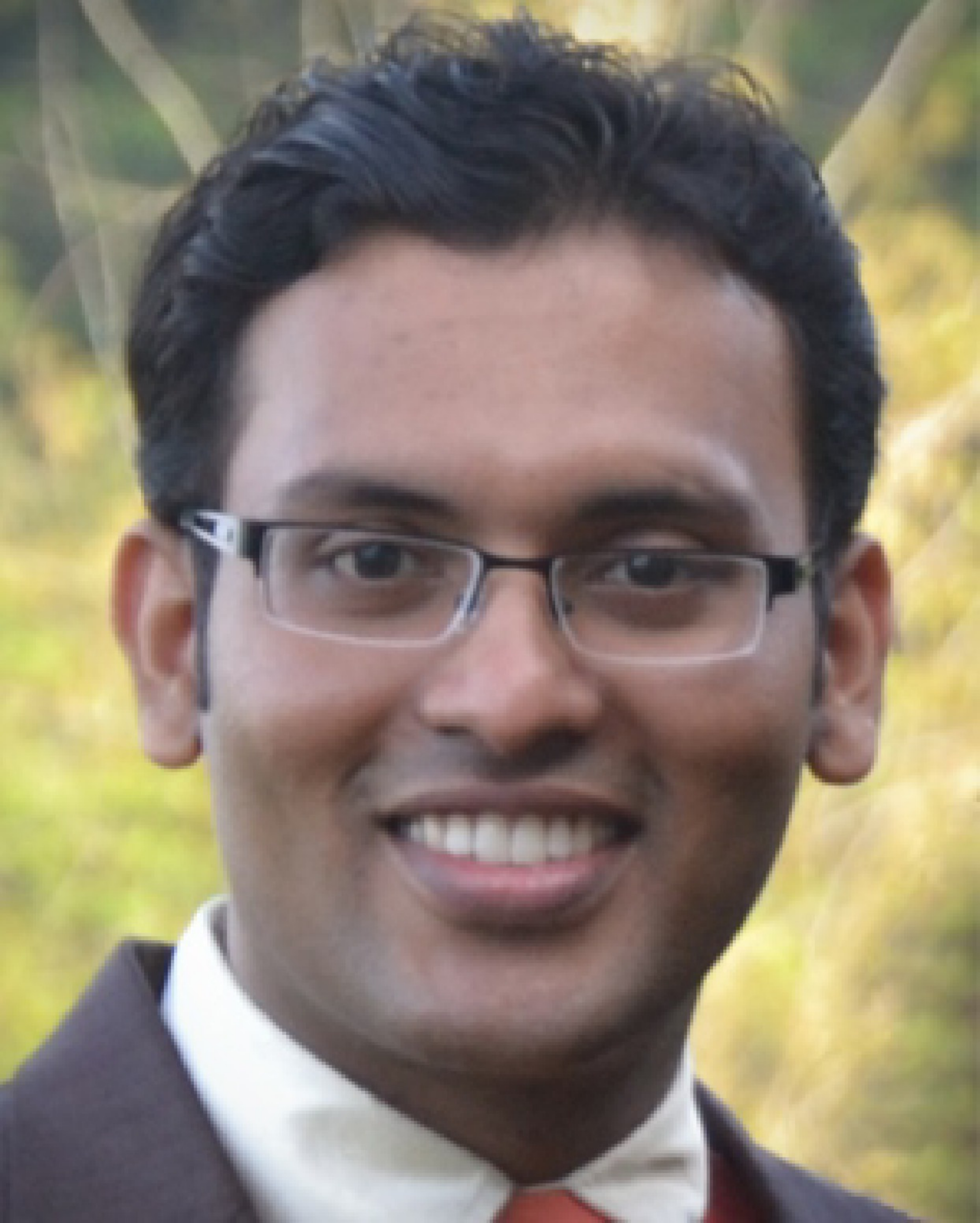}}]{Sumudu Samarakoon}
	received his B. Sc. (Hons.) degree in Electronic and Telecommunication Engineering from the University of Moratuwa, Sri Lanka and the M. Eng. degree from the Asian Institute of Technology, Thailand in 2009 and 2011, respectively.
	He is currently pursuing hid Dr. Tech degree in Communications Engineering at the University of Oulu, Finland.
	Sumudu is also a member of the research staff of the Centre for Wireless Communications (CWC), Oulu, Finland.
	His main research interests are in heterogeneous networks, radio resource management, machine learning, and game theory.
\end{IEEEbiography}
\vfill

\begin{IEEEbiography}[{\includegraphics[width=1in,height=1.25in,clip,keepaspectratio]{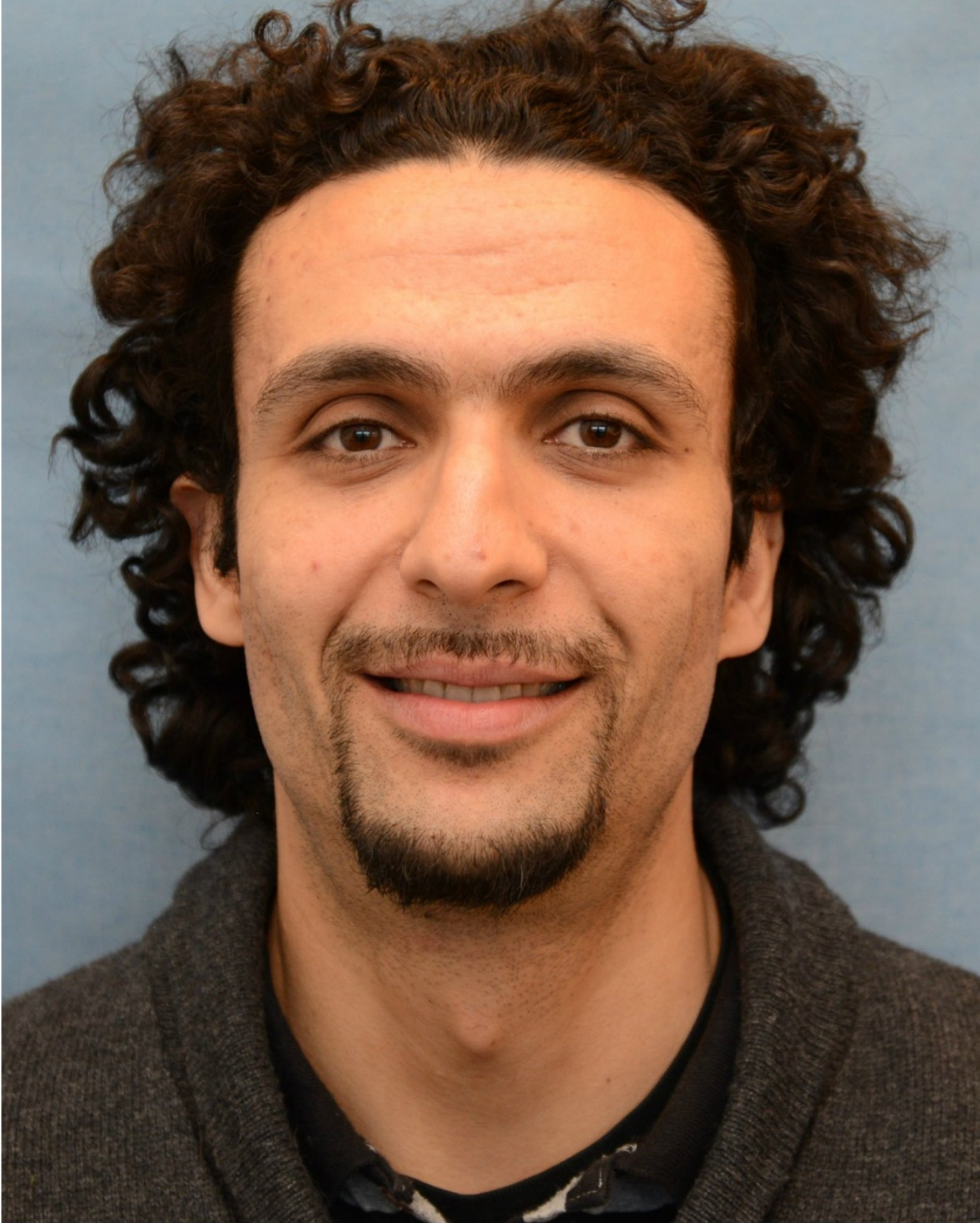}}]{Mehdi Bennis}
	received his M.Sc. degree in Electrical Engineering jointly from the EPFL, Switzerland and the Eurecom Institute, France in 2002.
	From 2002 to 2004, he worked as a research engineer at IMRA-EUROPE investigating adaptive equalization algorithms for mobile digital TV.
	In 2004, he joined the Centre for Wireless Communications (CWC) at the University of Oulu, Finland as a research scientist.
	In 2008, he was a visiting researcher at the Alcatel-Lucent chair on flexible radio, SUPELEC.
	He obtained his Ph.D. in December 2009 on spectrum sharing for future mobile cellular systems.
	
	His main research interests are in radio resource management, heterogeneous networks, game theory and machine learning in 5G networks and beyond.
	He has co-authored one book and published more than 100 research papers in international conferences, journals and book chapters.
	He was the recipient of the prestigious 2015 Fred W. Ellersick Prize from the IEEE Communications Society.
	Dr. Bennis serves as an editor for the IEEE Transactions on Wireless Communications.
\end{IEEEbiography}
\vspace{-5mm}
\begin{IEEEbiography}[{\includegraphics[width=1in,height=1.25in,clip,keepaspectratio]{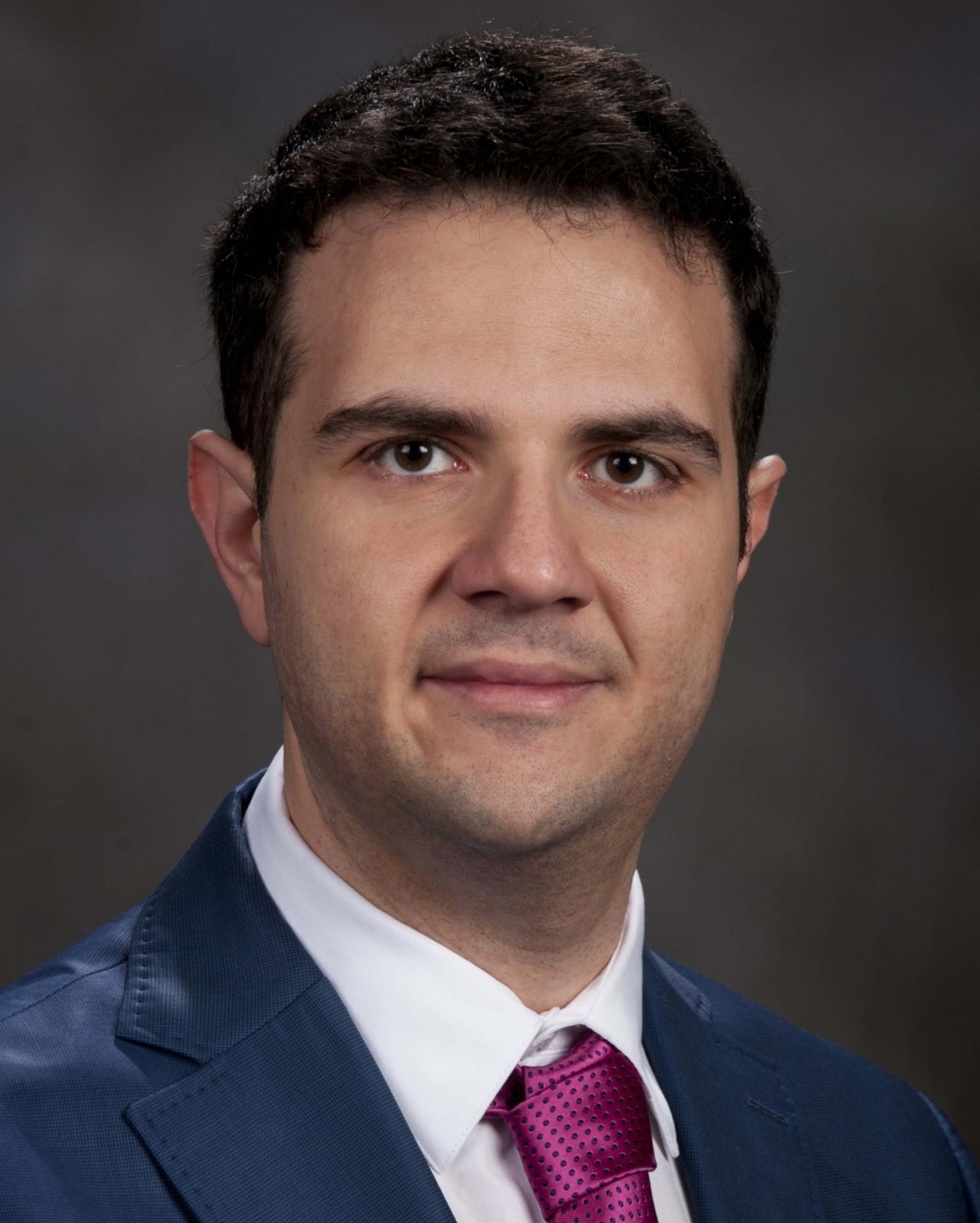}}]{Walid Saad}
	(S'07, M'10, SM’15) Walid Saad received his Ph.D degree from the University of Oslo in 2010. 
	Currently,  he is an Assistant Professor and the Steven O. Lane Junior Faculty Fellow at the Department of Electrical and Computer Engineering at Virginia Tech, where he leads the Network Science, Wireless, and Security (NetSciWiS) laboratory, within the Wireless@VT research group. 
	His  research interests include wireless networks, game theory, cybersecurity, and cyber-physical systems. 
	Dr. Saad is the recipient of the NSF CAREER award in 2013, the AFOSR summer faculty fellowship in 2014, and the Young Investigator Award from the Office of Naval Research (ONR) in 2015. 
	He was the author/co-author of five conference best paper awards at WiOpt in 2009, ICIMP in 2010, IEEE WCNC in 2012,  IEEE PIMRC in 2015, and IEEE SmartGridComm in 2015. 
	He is the recipient of the 2015 Fred W. Ellersick Prize from the IEEE Communications Society. 
	Dr. Saad serves as an editor for the IEEE Transactions on Wireless Communications, IEEE Transactions on Communications, and IEEE Transactions on Information Forensics and Security.
\end{IEEEbiography}
\vspace{-5mm}
\begin{IEEEbiography}[{\includegraphics[width=1in,height=1.25in,clip,keepaspectratio]{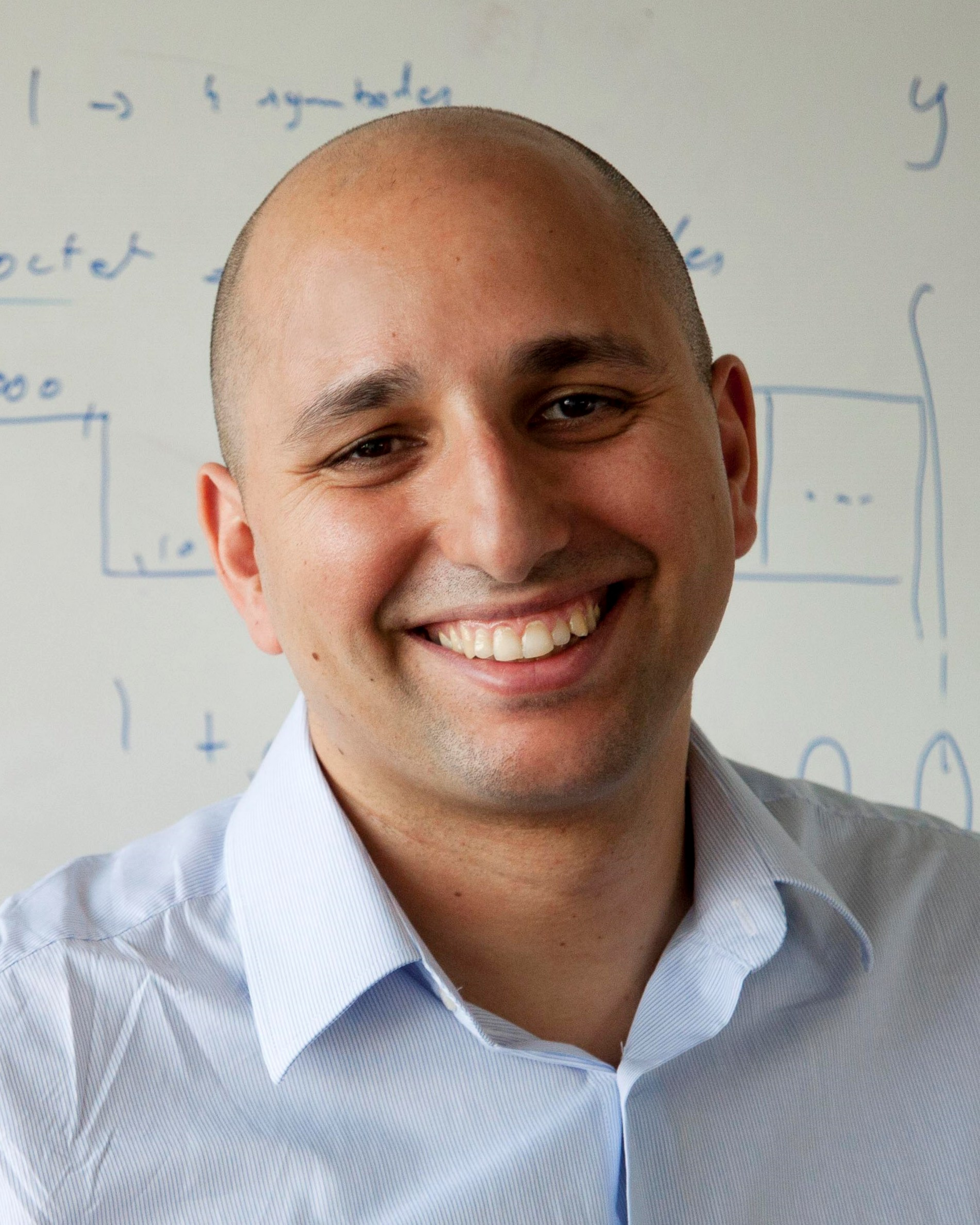}}]{M\'{e}rouane Debbah}
	(S'01-AM'03-M'04-SM'08- 1030 F'15) entered the Ecole Normale Sup\'{e}rieure de Cachan (France) in 1996 where he received his M.Sc and Ph.D. degrees respectively. 
	He worked for Motorola Labs (Saclay, France) from 1999-2002 and the Vienna Research Center for Telecommunications (Vienna, Austria) until 2003. 
	From 2003 to 2007, he joined the Mobile Communications department of the Institut Eurecom (Sophia Antipolis, France) as an Assistant Professor. 
	Since 2007, he is a Full Professor at CentraleSupelec (Gif-sur-Yvette, France). 
	From 2007 to 2014, he was the director of the Alcatel-Lucent Chair on Flexible Radio. Since 2014, he is Vice-President of the Huawei France R\&D center and director of the Mathematical and Algorithmic Sciences Lab. His research interests lie in fundamental mathematics, algorithms, statistics, information \& communication sciences research. 
	He is an Associate Editor in Chief of the journal Random Matrix: Theory and Applications and was an associate and senior area editor for IEEE Transactions on Signal Processing respectively in 2011-2013 and 2013-2014. 
	M\'{e}rouane Debbah is a recipient of the ERC grant MORE (Advanced Mathematical Tools for Complex Network Engineering). 
	He is a IEEE Fellow, a WWRF Fellow and a member of the academic senate of Paris-Saclay. 
	He has managed 8 EU projects and more than 24 national and international projects. 
	He received 14 best paper awards, among which the 2007 IEEE GLOBECOM best paper award, the Wi-Opt 2009 best paper award, the 2010 Newcom++ best paper award, the WUN CogCom Best Paper 2012 and 2013 Award, the 2014 WCNC best paper award, the 2015 ICC best paper award, the 2015 IEEE Communications Society Leonard G. Abraham Prize and 2015 IEEE Communications Society Fred W. Ellersick Prize as well as the Valuetools 2007, Valuetools 2008, CrownCom2009, Valuetools 2012 and SAM 2014 best student paper awards. 
	He is the recipient of the Mario Boella award in 2005, the IEEE Glavieux Prize Award in 2011 and the Qualcomm Innovation Prize Award in 2012.
\end{IEEEbiography}

\vfill

\begin{IEEEbiography}[{\includegraphics[width=1in,height=1.25in,clip,keepaspectratio]{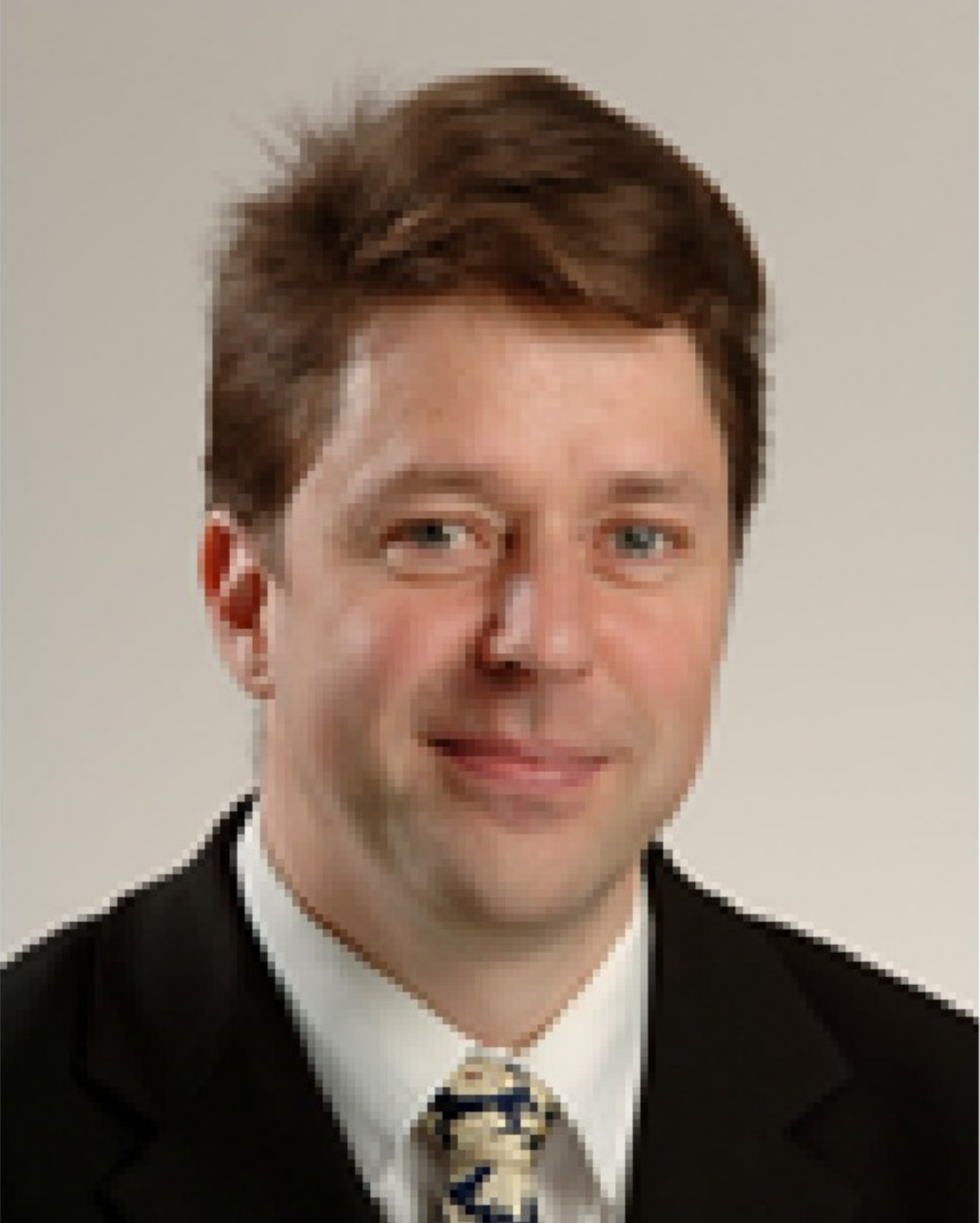}}]{Matti Latva-aho}
	received the M.Sc., Lic.Tech. and Dr. Tech (Hons.) degrees in Electrical Engineering from the University of Oulu, Finland in 1992, 1996 and 1998, respectively.
	From 1992 to 1993, he was a Research Engineer at Nokia Mobile Phones, Oulu, Finland after which he joined Centre for Wireless Communications (CWC) at the University of Oulu.
	Prof. Latva-aho was Director of CWC during the years 1998-2006 and Head of Department for Communication Engineering until August 2014.
	Currently he is Professor of Digital Transmission Techniques at the University of Oulu.
	His research interests are related to mobile broadband communication systems and currently his group focuses on 5G systems research.
	Prof. Latva-aho has published 300+ conference or journal papers in the field of wireless communications.
\end{IEEEbiography}

\end{document}